\numberwithin{equation}{section}
\renewcommand{\labelenumi}{(\roman{enumi})}
\theoremstyle{plain}
\newtheorem{theorem}{Theorem}[section]
\newtheorem{corollary}[theorem]{Corollary}
\newtheorem{lemma}[theorem]{Lemma}
\newtheorem{proposition}[theorem]{Proposition}
\theoremstyle{definition}
\newtheorem{definition}[theorem]{Definition}
\newtheorem{example}[theorem]{Example}
\newtheorem{remark}[theorem]{Remark}
\newcommand{\comp}{\makebox[7pt]{\raisebox{1pt}{\tiny $\circ$}}}
\newcommand{\R}{\mathbb{R}} 
\newcommand{\F}{\mathbb{F}}
\newcommand{\Leg}{\mathbb{F}L}
\newcommand{\hor}{\operatorname{hor}}
\newcommand{\ver}{\operatorname{ver}}
\newcommand{\id}{{\operatorname{id}}}
\newcommand{\pr}{\mathrm{pr}}
\def\od#1#2{\dfrac{d#1}{d#2}}
\def\pd#1#2{\dfrac{\partial #1}{\partial #2}}
\def\tpd#1#2{\partial #1/\partial #2}
\def\parentheses#1{\left(#1\right)}
\def\brackets#1{\left[#1\right]}
\def\braces#1{\left\{#1\right\}}
\def\Span{\operatorname{span}} 
\def\DS{\displaystyle}
\def\defeq{\mathrel{\mathop:}=}
\def\eqdef{=\mathrel{\mathop:}}
\def\setdef#1#2{ \left\{ #1 \ |\ #2 \right\} }
\def\ip#1#2{\left\langle#1,#2\right\rangle}
\def\hl{\operatorname{hl}}
\def\eps{\varepsilon}
\begin{document}


\title[H--J Theory for Lagrange--Dirac Systems]
{Hamilton--Jacobi Theory for Degenerate Lagrangian Systems with Holonomic and Nonholonomic Constraints}

\author{Melvin Leok}
\email{mleok@math.ucsd.edu}

\author{Tomoki Ohsawa}
\email{tohsawa@ucsd.edu}
\affiliation{Department of Mathematics, University of California, San Diego, 9500 Gilman Drive, La Jolla, California, USA}

\author{Diana Sosa}
\email{dnsosa@ull.es}
\affiliation{Departamento de Econom\'{\i}a Aplicada y Unidad Asociada ULL-CSIC
  Geo\-me\-tr\'{\i}a Diferencial y Mec\'anica Geom\'etrica, Facultad de
  CC. EE. y Empresariales, Universidad de La Laguna, La Laguna,
  Tenerife, Canary Islands, Spain}

\date{\today}

\begin{abstract}
  We extend Hamilton--Jacobi theory to Lagrange--Dirac (or implicit Lagrangian) systems, a generalized formulation of Lagrangian mechanics that can incorporate degenerate Lagrangians as well as holonomic and nonholonomic constraints.
  We refer to the generalized Hamilton--Jacobi equation as the {\em Dirac--Hamilton--Jacobi equation}.
  For non-degenerate Lagrangian systems with nonholonomic constraints, the theory specializes to the recently developed nonholonomic Hamilton--Jacobi theory.
  We are particularly interested in applications to a certain class of degenerate nonholonomic Lagrangian systems with symmetries, which we refer to as {\em weakly degenerate Chaplygin systems}, that arise as simplified models of nonholonomic mechanical systems; these systems are shown to reduce to non-degenerate almost Hamiltonian systems, i.e., generalized Hamiltonian systems defined with non-closed two-forms.
  Accordingly, the Dirac--Hamilton--Jacobi equation reduces to a variant of the nonholonomic Hamilton--Jacobi equation associated with the reduced system.
  We illustrate through a few examples how the Dirac--Hamilton--Jacobi equation can be used to exactly integrate the equations of motion.
\end{abstract}

\pacs{02.40.Yy, 45.20.Jj}
\keywords{Hamilton--Jacobi theory, Lagrange--Dirac systems, Dirac structures, degenerate or singular Lagrangian systems, nonholonomic constraints.}

\maketitle

\section{Introduction}
\subsection{Degenerate Lagrangian Systems and Lagrange--Dirac Systems}
Degenerate Lagrangian systems are the motivation behind the work
of \citet{Di1950,Di1958a,Di1964} on constrained systems, where
degeneracy of Lagrangians imposes constraints on the phase space
variables. The theory gives a prescription for writing such
systems as Hamiltonian systems, and is used extensively for gauge
systems and their quantization~(see, e.g., \citet{HeTe1992}).

Dirac's theory of constraints was geometrized by
\citet{GoNeHi1978} (see also \citet{GoNe1979a,GoNe1979b,GoNe1980}
and \citet{Ku1969}) to yield a constraint algorithm to identify
the solvability condition for presymplectic systems and also to
establish the equivalence between Lagrangian and Hamiltonian
descriptions of degenerate Lagrangian systems. The algorithm is
extended by \citet{LeMa1997a} to degenerate Lagrangian systems
with nonholonomic constraints.

On the other hand, Lagrange--Dirac (or implicit Lagrangian)
systems of \citet{YoMa2006a,YoMa2006b} provide a rather direct way of
describing degenerate Lagrangian systems that do not explicitly
involve constraint algorithms. Moreover, the Lagrange--Dirac
formulation can address more general constraints, particularly
nonholonomic constraints, by directly encoding them in terms of
Dirac structures, as opposed to symplectic or Poisson structures.

\subsection{Hamilton--Jacobi Theory for Constrained Degenerate Lagrangian Systems}
The goal of this paper is to generalize Hamilton--Jacobi theory to
Lagrange--Dirac systems. The challenge in doing so is to
generalize the theory to simultaneously address degeneracy and
nonholonomic constraints. For degenerate Lagrangian systems, some
work has been done, built on Dirac's theory of constraints, on
extending Hamilton--Jacobi theory (see, e.g., \citet{HeTe1992}[Section~5.4] and \citet{RoSc2003}) as well as
from the geometric point of view by \citet{CaGrMaMaMuRo2006}. For
nonholonomic systems, \citet{IgLeMa2008} generalized the geometric
Hamilton--Jacobi theorem (see Theorem~5.2.4 of \citet{AbMa1978}) to
nonholonomic systems, which has been studied further by
\citet{LeMaMa2010}, \citet{OhBl2009}, \citet{CaGrMaMaMuRo2010},
and \citet{OhFeBlZe2011}. However, to the authors' knowledge, no
work has been done that can deal with both degeneracy and
nonholonomic constraints.

\subsection{Applications to Degenerate Lagrangian Systems with Nonholonomic Constraints}
We are particularly interested in applications to degenerate
Lagrangian systems with nonholonomic constraints. Such systems
arise regularly, in practice, as model reductions of multiscale
systems: For example, consider a nonholonomic mechanical system
consisting of rigid bodies, some of which are significantly
lighter than the rest. Then, one can make an assumption that the
light parts are massless for the sake of simplicity; this often
results in a degenerate Lagrangian. While na\"\i vely making a
massless approximation usually leads to unphysical
results\footnote{This setting usually gives a singular
perturbation problem with the small mass being its parameter. For
example, for the Lagrangian $L_{\eps} = \eps \dot{x}^{2}/2 -
x^{2}/2$, the Euler--Lagrange equation gives $\eps \ddot{x} + x = 0$;
the solution corresponding to the massless Lagrangian $L_{0}$
deviates significantly from the original solution.}, a certain
class of nonholonomic systems seem to allow massless
approximations without such inconsistencies. See, for example, the
modelling of a bicycle in \citet{Ge1994} and \citet{GeMa1995} (see
also \citet{KoMa1997c} and Example~\ref{ex:Bicycle} of the present paper).

\subsection{Outline}
We first briefly review Dirac structures and Lagrange--Dirac
systems in Section~\ref{sec:LDS}. Section~\ref{sec:WDCS}
introduces a class of degenerate nonholonomic Lagrangian systems
with symmetries that reduce to non-degenerate Lagrangian systems
after symmetry reduction; we call them {\em weakly degenerate
Chaplygin systems}. Section~\ref{sec:DiracHJ} gives
Hamilton--Jacobi theory for Lagrange--Dirac systems, defining the
{\em Dirac--Hamilton--Jacobi equation}, and shows applications to
degenerate Lagrangian systems with holonomic and nonholonomic
constraints. We then apply the theory to weakly degenerate
Chaplygin systems in Section~\ref{sec:DiracHJ-WDCS}; we derive a
formula that relates solutions of the Dirac--Hamilton--Jacobi
equations with those of the nonholonomic Hamilton--Jacobi equation
for the reduced weakly degenerate Chaplygin systems.
Appendix~\ref{sec:RedWDCS} discusses reduction of weakly
degenerate Chaplygin systems by a symmetry reduction of the associated Dirac structure.

\section{Lagrange--Dirac Systems}
\label{sec:LDS} Lagrange--Dirac (or implicit Lagrangian) systems
are a generalization of Lagrangian mechanics to systems with
(possibly) degenerate Lagrangians and constraints. Given a
configuration manifold $Q$, a Lagrange--Dirac system is defined
using a generalized Dirac structure on $T^{*}Q$, or more precisely
a subbundle $D$ of the Whitney sum $TT^{*}Q \oplus T^{*}T^{*}Q$.

\subsection{Dirac Structures}
Let us first recall the definition of a (generalized) Dirac
structure on a manifold $M$.
Let $M$ be a manifold.
Given a subbundle $D \subset TM \oplus T^{*}M$, the subbundle $D^{\perp} \subset TM \oplus T^{*}M$ is defined as follows:
\begin{equation*}
  D^{\perp} \defeq \setdef{ (X, \alpha) \in TM \oplus T^{*}M }{ \ip{\alpha'}{X} + \ip{\alpha}{X'} = 0 \text{ for any $(X',\alpha') \in D$} }.
\end{equation*}
\begin{definition}
  A subbundle $D \subset TM \oplus T^{*}M$ is called a {\em generalized Dirac structure} if $D^{\perp} = D$.
\end{definition}
Note that the notion of Dirac structures, originally introduced in
\citet{co1990a}, further satisfies an integrability condition, which
we have omitted as it is not compatible with our interest in
nonintegrable (nonholonomic) constraints. Hereafter, we refer to
generalized Dirac structures as simply ``Dirac structures.''

\subsection{Induced Dirac Structures}
Here we consider the induced Dirac structure $D_{\Delta_{Q}}
\subset TT^{*}Q \oplus T^{*}T^{*}Q$ introduced in
\citet{YoMa2006a}. See \citet{DaSc1998} for more general Dirac
structures, \citet{BlCr1997} and \citet{Sc1998} for those defined
by Kirchhoff current and voltage laws, and \citet{Sc2006} for
applications of Dirac structures to interconnected systems.

Let $Q$ be a smooth manifold, $\Delta_Q\subset TQ$ a regular
distribution on $Q$, and $\Omega$ the canonical symplectic
two-form on $T^*Q$. Denote by $\Delta_Q^\circ$ the annihilator of
$\Delta_Q$ and by $\Omega^\flat:TT^*Q\to T^*T^*Q$ the flat map induced by $\Omega$.
The distribution $\Delta_Q\subset TQ$ may be lifted to
the distribution $\Delta_{T^*Q}$ on $T^*Q$ defined as
\begin{equation*}
\Delta_{T^*Q} \defeq (T\pi_Q)^{-1}(\Delta_Q)\subset TT^*Q,
\end{equation*}
where $\pi_Q:T^*Q\to Q$ is the canonical projection and
$T\pi_Q:TT^*Q\to TQ$ is its tangent map. Denote its annihilator by
$\Delta^\circ_{T^*Q}\subset T^*T^*Q$.

\begin{definition}[\citet{YoMa2006a,YoMa2006b}; see also \citet{DaSc1998}]
  The {\em induced (generalized) Dirac structure} $D_{\Delta_Q}$ on $T^*Q$
  is defined, for each $z\in T^* Q$, as
  \begin{equation*}
    D_{\Delta_{Q}}(z) \defeq \setdef{
      (v_z,\alpha_z) \in T_z T^*Q \oplus T^*_z T^*Q
    }{
      v_z\in\Delta_{T^*Q}(z),\;
      \alpha_z-\Omega^{\flat}(z)(v_z) \in \Delta^\circ_{T^*Q}(z)
    }.
  \end{equation*}
\end{definition}
If we choose local coordinates $q=(q^i)$ on an open subset $U$ of
$Q$ and denote by $(q,\dot{q})=(q^i,\dot{q}^i)$ (respectively,
$(q,p)=(q^i,p_i)$), the corresponding local coordinates on $TQ$
(respectively, $T^*Q$), then a local representation for the Dirac
structure is given by
\begin{multline*}
  D_{\Delta_Q}(q,p) = \left\{
    ((q,p,\dot{q},\dot{p}),(q,p,\alpha_q,\alpha_p)) \in T_{(q,p)}T^*Q \oplus T^{*}_{(q,p)} T^*Q
    \ |\
    \right.
    \\
    \left.
    \dot{q}\in\Delta_Q(q), \;
    \alpha_p=\dot{q}, \;
    \alpha_q+\dot{p}\in\Delta_Q^\circ(q)
    \right\}.
\end{multline*}

\subsection{Lagrange--Dirac Systems}
To define a Lagrange--Dirac system, it is necessary to introduce the Dirac differential of a Lagrangian function.
Following \citet{YoMa2006a}, let us first introduce the following maps, originally due to \citet{Tu1976a,Tu1976b}, between the iterated tangent and cotangent bundles.
\begin{equation}
  \label{eq:gamma_Q}
  \begin{array}{c}
    \xymatrix@!0@R=1.5in@C=0.95in{
      T^{*}TQ \ar@/^{1.5pc}/[rr]^{\gamma_{Q}}  &
      TT^{*}Q \ar[l]^{\kappa_{Q}} \ar[r]_{\Omega^{\flat}} &
      T^{*}T^{*}Q
    }
    \qquad
    \xymatrix@!0@R=1.5in@C=1.15in{
      (q, \delta q, \delta p, p) \ar@{|->}@/^{1.5pc}/[rr] &
      (q, p, \delta q, \delta p) \ar@{|->}[l] \ar@{|->}[r] &
      (q, p, -\delta p, \delta q) 
    }
  \end{array}
\end{equation}
Let $L:TQ\to \R$ be a Lagrangian function and let
$\gamma_Q:T^*TQ\to T^*T^*Q$ be the diffeomorphism defined as
$\gamma_Q \defeq \Omega^\flat \comp \kappa_Q^{-1}$ (see \eqref{eq:gamma_Q}). Then, the Dirac differential of $L$ is
the map $\mathfrak{D}L:TQ\to T^*T^*Q$ given by
\begin{equation*}\mathfrak{D}L=\gamma_Q \comp dL.\end{equation*}
In local coordinates,
\begin{equation*}
\mathfrak{D}L(q,v)=\left(q,\displaystyle\frac{\partial L}{\partial
v},-\frac{\partial L}{\partial q},v\right).
\end{equation*}

\begin{definition}
  Let $L:TQ\to \R$ be a Lagrangian (possibly degenerate) and $\Delta_Q\subset
  TQ$ be a given regular constraint distribution on the
  configuration manifold $Q$.
  Let
  \begin{equation*}
    P \defeq {\mathbb F}L(\Delta_Q) \subset T^*Q
  \end{equation*}
  be the image of $\Delta_Q$ by the Legendre transformation
  and $X$ be a (partial) vector field on $T^*Q$ defined at points of $P$.
  Then, a {\em Lagrange--Dirac system} is the triple
  $(L,\Delta_Q,X)$ that satisfies, for each point $z \in P\subset
  T^*Q$,
  \begin{equation}
    \label{eq:LDS}
    (X(z),\mathfrak{D}L(u))\in D_{\Delta_Q}(z),
  \end{equation}
  where $u\in \Delta_Q$ such that ${\mathbb F}L(u)=z$.
  In local coordinates, Eq.~\eqref{eq:LDS} is written as
  \begin{equation}
    \label{eq:LDEq}
    p = \frac{\partial L}{\partial v}(q,v),
    \qquad
    \dot{q} \in \Delta_Q(q),
    \qquad
    \dot{q}=v,
    \qquad
    \dot{p} - \frac{\partial L}{\partial q}(q,v)\in\Delta_Q^\circ(q),
  \end{equation}
  which we call the {\em Lagrange--Dirac equations}.
\end{definition}
We note that the idea of applying implicit differential equations to nonholonomic systems is found in an earlier work by \citet{IbLeMaMa1996}; see also \citet{GrGr2008} for a generalization to vector bundles with algebroid structures.

\begin{definition}
  A {\em solution curve} of a Lagrange--Dirac system $(L,\Delta_Q,X)$ is an integral curve $(q(t),p(t))$,
  $t_1\leq t \leq t_2$, of $X$ in $P\subset T^*Q$.
\end{definition}

\subsection{Lagrange--Dirac Systems on the Pontryagin Bundle {\boldmath $TQ \oplus T^{*}Q$}}
\label{ssec:LDSonPontryaginBundle}
We may also define a Lagrange--Dirac system on $TQ\oplus T^*Q$ as well.
We will use the submanifold ${\mathcal K}$ of the Pontryagin bundle introduced in Yoshimura and Marsden
\cite{YoMa2006a} and the (partial) vector field $\tilde{X}$ on
$TQ\oplus T^*Q$, associated with a (partial) vector field $X$ on
$T^*Q$, defined in Yoshimura and Marsden \cite{YoMa2006b}. Let us
recall the definition of these two objects.

Given a Lagrangian $L:TQ\to \R$, the {\em generalized energy},
$\mathcal{E}:TQ\oplus T^*Q\to\R$, is given by
\begin{equation*}
\mathcal{E}(q,v,p)=p\cdot v-L(q,v).
\end{equation*}
The submanifold ${\mathcal K}$ is defined as the set of stationary points of 
$\mathcal{E}(q,v,p)$ with respect to $v$, with $v\in\Delta_Q(q)$.
So, ${\mathcal K}$ is represented by
\begin{equation}
  \label{eq:mathcalK}
  {\mathcal K} = \setdef{
    (q,v,p)\in TQ\oplus T^*Q
  }{
    v\in\Delta_Q(q),\;
    p = \frac{\partial L}{\partial v}(q,v)
  }
\end{equation}
This submanifold can also be described as the graph of the
Legendre transformation restricted to the constraint distribution
$\Delta_Q$. We can also obtain the submanifold ${\mathcal K}$ as
follows.
Let $\pr_{TQ}: TQ \oplus T^{*}Q \to TQ$ be the projection to the first factor and $\pi_{TQ}:T^*TQ\to TQ$ be the cotangent bundle projection.
Consider the map $\rho_{T^*TQ}:T^*TQ\to TQ\oplus T^*Q$ (see \citet{YoMa2006a}[Section~4.10]) which has the property that $\pr_{TQ}\comp\rho_{T^*TQ}=\pi_{TQ}$; this map is defined intrinsically to be the direct sum of $\pi_{TQ}:T^*TQ\to TQ$ and $\tau_{T^*Q} \comp \kappa_Q^{-1}:T^*TQ\to T^*Q$ (see \citet{YoMa2006a}[Section 4.10]), where $\tau_{T^{*}Q}: TT^{*}Q \to T^{*}Q$ is the tangent bundle projection.
Then, we can consider the map
\begin{equation*}
  \rho_{T^*TQ} \comp dL: TQ\to TQ\oplus T^*Q,
\end{equation*}
whose local expression is
\begin{equation*}
  \rho_{T^*TQ} \comp dL(q,v) =\left(q,v,\displaystyle\frac{\partial
      L}{\partial v}(q,v)\right).
\end{equation*}
Therefore, we have
\begin{equation*}
  \mathcal{K} =
  \rho_{T^*TQ} \comp dL(\Delta_Q).
\end{equation*}

Now, given a (partial) vector field $X$ on $T^*Q$ defined at
points of $P$, one can construct a (partial) vector field
$\tilde{X}$ on $TQ\oplus T^*Q$ defined at points of ${\mathcal K}$
as follows (see \citet{YoMa2006b}[Section 3.8]). For
$(q,v,p)\in{\mathcal K}$, $\tilde{X}(q,v,p)$ is tangent to a curve
$(q(t),v(t),p(t))$ in $TQ\oplus T^*Q$ such that
$(q(0),v(0),p(0))=(q,v,p)$ and $X(q,p)$ is tangent to the curve $(q(t),p(t))$ in $T^*Q$.  This (partial) vector field $\tilde{X}$ is not
unique; however it has the property that, for each
$x\in{\mathcal K}\subset TQ\oplus T^*Q$,
\begin{equation*}
  T \pr_{T^*Q}(\tilde{X}(x))=X(\pr_{T^*Q}(x)),
\end{equation*}
where $\pr_{T^{*}Q}: TQ \oplus T^{*}Q \to T^{*}Q$ is the projection to the second factor.

On the other hand, from the distribution $\Delta_Q$ on $Q$, we can
define a distribution $\Delta_{TQ\oplus T^*Q}$ on $TQ\oplus T^*Q$
by
\begin{equation*}
  \Delta_{TQ\oplus T^*Q}=(T\pr_Q)^{-1}(\Delta_Q),
\end{equation*}
where $\pr_Q:TQ\oplus T^*Q\to Q$. Note that $\Delta_{TQ\oplus
T^*Q}=(T\pr_{T^*Q})^{-1}(\Delta_{T^*Q})$, since $\pr_Q=\pi_Q\comp
\pr_{T^*Q}$. Then, as $\pr^*_{T^*Q}\Omega$ is a skew-symmetric
two-form on $TQ\oplus T^*Q$, we can consider the following induced
(generalized) Dirac structure on $TQ\oplus T^*Q$:
\begin{multline*}
  D_{TQ\oplus T^*Q}(x) \defeq
  \Bigl\{(\tilde{v}_x,\tilde{\alpha}_x)\in T_x(TQ\oplus T^*Q) \oplus T^*_x(TQ\oplus T^*Q) \ |\  \\
  \tilde{v}_x\in \Delta_{TQ\oplus T^*Q}(x),\;
  \tilde{\alpha}_x-(\pr^*_{T^*Q}\Omega)^\flat (x)(\tilde{v}_x) \in \Delta_{TQ\oplus T^*Q}^{\circ}(x)\Bigr\},
\end{multline*}
for $x \in TQ\oplus T^*Q$. A local representation for the Dirac
structure $D_{TQ\oplus T^*Q}$ is
\begin{multline*}
  D_{TQ\oplus T^*Q}(q,v,p) = 
  \Bigl\{((q,v,p,\dot{q},\dot{v},\dot{p}),(q,v,p,\tilde{\alpha}_q,\tilde{\alpha}_v,\tilde{\alpha}_p)) \ |\ \\
   \dot{q}\in\Delta_Q(q), \;
  \tilde{\alpha}_p=\dot{q}, \;
  \tilde{\alpha}_v=0, \;
  \tilde{\alpha}_q+\dot{p}\in\Delta^\circ_Q(q)\Bigr\}.
\end{multline*}
Then, we have the following result.
\begin{theorem}
  For every $u\in \Delta_Q$, define $z \defeq {\mathbb F}L(u)\in P$ and $x \defeq \rho_{T^*TQ} \comp dL(u) \in {\mathcal K}$ so that $\pr_{T^*Q}(x) = z$.
  Then, we have
  \begin{equation*}
    (X(z),\mathfrak{D}L(u))\in D_{\Delta_Q}(z)
    \iff
    (\tilde{X}(x),d\mathcal{E}(x))\in D_{TQ\oplus T^*Q}(x).
  \end{equation*}
\end{theorem}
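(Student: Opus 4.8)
The plan is to reduce each of the two membership relations to the Lagrange--Dirac equations \eqref{eq:LDEq} in a local chart, and then to reconcile the two descriptions using the projection property linking $\tilde{X}$ and $X$. Writing $u=(q,v)$ with $v\in\Delta_Q(q)$, we have $z=\mathbb{F}L(u)=(q,p)$ with $p=\partial L/\partial v(q,v)$ and $x=(q,v,p)\in\mathcal{K}$. First I would record the two covector data entering the two conditions. The Dirac differential is
\begin{equation*}
  \mathfrak{D}L(u)=\left(q,p,-\frac{\partial L}{\partial q},v\right),
\end{equation*}
so its fiber part over $z$ is $(\alpha_q,\alpha_p)=(-\partial L/\partial q,\,v)$.

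The crux of the argument is to compute $d\mathcal{E}$ on $\mathcal{K}$. Since $\mathcal{E}(q,v,p)=p\cdot v-L(q,v)$, we obtain
\begin{equation*}
  d\mathcal{E}(q,v,p)=\left(q,v,p,-\frac{\partial L}{\partial q},\,p-\frac{\partial L}{\partial v},\,v\right),
\end{equation*}
and evaluating at $x\in\mathcal{K}$, where $p=\partial L/\partial v(q,v)$, the $v$-fiber component vanishes, leaving
\begin{equation*}
  d\mathcal{E}(x)=\left(q,v,p,-\frac{\partial L}{\partial q},\,0,\,v\right).
\end{equation*}
Hence the $(q,p)$-fiber components of $d\mathcal{E}(x)$, namely $(\tilde{\alpha}_q,\tilde{\alpha}_p)=(-\partial L/\partial q,\,v)$, coincide exactly with the fiber components $(\alpha_q,\alpha_p)$ of $\mathfrak{D}L(u)$, while the extra $v$-slot is $\tilde{\alpha}_v=0$. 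This single identity --- that $d\mathcal{E}$ restricted to $\mathcal{K}$ reproduces $\mathfrak{D}L$ in the $(q,p)$-slots and is annihilated in the $v$-slot --- is what couples the two induced Dirac structures.

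It then remains to substitute these data into the two displayed local representations. With $X(z)=(q,p,\dot{q},\dot{p})$, the relation $(X(z),\mathfrak{D}L(u))\in D_{\Delta_Q}(z)$ unwinds into $\dot{q}\in\Delta_Q(q)$, $\dot{q}=v$, and $\dot{p}-\partial L/\partial q\in\Delta_Q^\circ(q)$, which is precisely \eqref{eq:LDEq}. With $\tilde{X}(x)=(q,v,p,\dot{q},\dot{v},\dot{p})$ and $\tilde{\alpha}_v=0$ automatic, the relation $(\tilde{X}(x),d\mathcal{E}(x))\in D_{TQ\oplus T^*Q}(x)$ unwinds into the same three relations together with the vacuous $0=0$ in the $v$-slot. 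The only genuine obstacle is the non-uniqueness of $\tilde{X}$: its $\dot{v}$-component is arbitrary and never enters any condition, while the projection property $T\pr_{T^*Q}(\tilde{X}(x))=X(\pr_{T^*Q}(x))=X(z)$ forces the $(\dot{q},\dot{p})$-components of $\tilde{X}$ to equal those of $X$. Consequently the surviving equations on the two sides are identical, giving the equivalence in both directions; for the forward implication any $\tilde{X}$ projecting to $X$ works, and for the reverse we read off $X$ from $\tilde{X}$ via the projection. I would close by remarking that, since $\mathcal{E}$, $\mathfrak{D}L$, both induced Dirac structures, and the projection relation are all intrinsically defined, this chartwise verification is coordinate-independent, so the equivalence holds globally on $\mathcal{K}$.
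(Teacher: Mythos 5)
Your proposal is correct and follows exactly the route the paper takes: the paper's proof simply asserts that the condition $(\tilde{X}(x),d\mathcal{E}(x))\in D_{TQ\oplus T^*Q}(x)$ locally reads as the Lagrange--Dirac equations~\eqref{eq:LDEq}, and your computation of $d\mathcal{E}$ on $\mathcal{K}$ together with the unwinding of both local representations is precisely the verification the authors leave implicit. The only difference is that you spell out the details (in particular that $\tilde{\alpha}_v=0$ is automatic on $\mathcal{K}$ and that the projection property reconciles $\tilde{X}$ with $X$), which the paper omits as ``not difficult to prove.''
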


\begin{proof}It is not difficult to prove that the condition $(\tilde{X}(x),d\mathcal{E}(x))\in D_{TQ\oplus T^*Q}(x)$ locally
reads
\begin{equation*}
  p = \frac{\partial L}{\partial v}(q,v),
  \qquad
  \dot{q} \in \Delta_Q(q),
  \qquad
  \dot{q}=v,
  \qquad
  \dot{p} - \frac{\partial L}{\partial q}(q,v)\in\Delta_Q^\circ(q),
\end{equation*}
that is, the Lagrange--Dirac equations~\eqref{eq:LDEq}; thus we have the equivalence.
\end{proof}

As a consequence, we obtain the following result which was
obtained by Yoshimura and Marsden (see Theorem 3.8 in \citet{YoMa2006a}).

\begin{corollary} If $(q(t),p(t))={\mathbb F}L(q(t),v(t))$, $t_1\leq t\leq
t_2$, is an integral curve of the vector field $X$ on $P$, then
  $\rho_{T^*TQ} \comp dL(q(t),v(t))$ is an integral curve of $\tilde{X}$ on ${\mathcal K}$.  Conversely, if $(q(t),v(t),p(t))$, $t_1\leq t\leq t_2$,
  is an integral curve of $\tilde{X}$ on ${\mathcal K}$, then $\pr_{T^*Q}(q(t),v(t),p(t))$ is an integral curve of $X$.
\end{corollary}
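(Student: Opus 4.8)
The plan is to derive the corollary directly from two structural features established above: that $\mathcal{K}$ is the graph of the Legendre transformation restricted to $\Delta_Q$, and that $\tilde{X}$ is $\pr_{T^*Q}$-related to $X$. Concretely, since $\mathcal{K}=\rho_{T^*TQ}\comp dL(\Delta_Q)$ and $\pr_{T^*Q}\comp\rho_{T^*TQ}\comp dL=\mathbb{F}L$, the projection $\pr_{T^*Q}$ restricts to a bijection from $\mathcal{K}$ onto $P$, with inverse $z=\mathbb{F}L(u)\mapsto x=\rho_{T^*TQ}\comp dL(u)$; in coordinates this reads $(q,p)\mapsto(q,v,p)$ with $p=\partial L/\partial v(q,v)$. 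The second ingredient is the identity $T\pr_{T^*Q}(\tilde{X}(x))=X(\pr_{T^*Q}(x))$ recorded above, which says precisely that $\tilde{X}$ and $X$ are $\pr_{T^*Q}$-related. With these in hand, the correspondence of integral curves reduces to the standard fact that $\pr_{T^*Q}$-related vector fields push integral curves forward to integral curves.

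For the forward implication I would argue as follows. Let $(q(t),p(t))$ be an integral curve of $X$ in $P$ with $(q(t),p(t))=\mathbb{F}L(q(t),v(t))$, so that $v(t)\in\Delta_Q(q(t))$ and $p(t)=\partial L/\partial v(q(t),v(t))$. Then the lifted curve $x(t):=\rho_{T^*TQ}\comp dL(q(t),v(t))=(q(t),v(t),p(t))$ lies in $\mathcal{K}$ for every $t$ by the description~\eqref{eq:mathcalK}. To see that it is an integral curve of $\tilde{X}$, fix $t$ and apply the very definition of $\tilde{X}$ at the point $x(t)\in\mathcal{K}$: $\tilde{X}(x(t))$ is the velocity of a curve through $x(t)$ whose $T^*Q$-projection has velocity $X(q(t),p(t))$. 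Since $(q(t),p(t))$ is an integral curve of $X$, the $T^*Q$-projection of $x(t)$ does have velocity $X(q(t),p(t))$, so the actual velocity $\dot{x}(t)$ is an admissible value of $\tilde{X}$ at $x(t)$; choosing $\tilde{X}$ accordingly yields $\dot{x}(t)=\tilde{X}(x(t))$.

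For the converse, let $x(t)=(q(t),v(t),p(t))$ be an integral curve of $\tilde{X}$ on $\mathcal{K}$, so that $\dot{x}(t)=\tilde{X}(x(t))$. Projecting by $T\pr_{T^*Q}$ and using the relatedness identity gives
\begin{equation*}
  \frac{d}{dt}\,\pr_{T^*Q}(x(t))
  = T\pr_{T^*Q}\big(\dot{x}(t)\big)
  = T\pr_{T^*Q}\big(\tilde{X}(x(t))\big)
  = X\big(\pr_{T^*Q}(x(t))\big),
\end{equation*}
so $\pr_{T^*Q}(x(t))=(q(t),p(t))$ is an integral curve of $X$, and it lies in $P=\pr_{T^*Q}(\mathcal{K})$.

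I expect the only delicate point to be the forward direction, precisely because $\tilde{X}$ is not unique: only its $T^*Q$-projection is prescribed, while its $v$-component is free. Thus one cannot claim that $\dot{x}(t)$ equals a single predetermined vector field; instead one must verify that $\dot{x}(t)$ is consistent with the defining property of $\tilde{X}$, which is exactly what the argument above does. The converse is untroubled by this ambiguity, since it uses only the projected relation, which holds for every admissible choice of $\tilde{X}$. If one prefers, the same conclusions can be read off from the preceding Theorem together with the coincidence of the local equations~\eqref{eq:LDEq} in the $(q,v,p)$ variables, but the projection argument seems the most transparent.
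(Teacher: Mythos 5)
Your proof is correct, but it takes a different route from the paper's. The paper offers no independent argument for this corollary: it presents it as an immediate consequence of the preceding theorem, whose proof consists of checking that the condition $(\tilde{X}(x),d\mathcal{E}(x))\in D_{TQ\oplus T^*Q}(x)$ reduces in coordinates to the same local equations~\eqref{eq:LDEq} as $(X(z),\mathfrak{D}L(u))\in D_{\Delta_Q}(z)$, so that the two systems have the same solution curves up to the identification of $P$ with $\mathcal{K}$ (and the paper further defers to Yoshimura and Marsden for the original statement). You instead argue directly from two structural facts: that $\pr_{T^*Q}$ restricts to a bijection $\mathcal{K}\to P$ inverting $u\mapsto\rho_{T^*TQ}\comp dL(u)$, and that $\tilde{X}$ and $X$ are $\pr_{T^*Q}$-related. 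This buys a proof that does not pass through coordinates or through the Dirac-structure conditions at all, and it isolates the one genuinely delicate point --- that $\tilde{X}$ is only determined up to its $v$-component, so the forward implication holds for a suitable choice of $\tilde{X}$ rather than for an arbitrary predetermined one --- which the paper's phrasing glosses over. The converse direction via $T\pr_{T^*Q}(\dot{x}(t))=X(\pr_{T^*Q}(x(t)))$ is clean and choice-independent, exactly as you note. Your closing remark correctly identifies the paper's actual route as the alternative.
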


Therefore, a Lagrange--Dirac system on the Pontryagin bundle is
given by a triple $(\mathcal{E},\mathcal{K},\tilde{X})$ satisfying
the condition
\begin{equation*}
  (\tilde{X}(x),d\mathcal{E}(x))\in D_{TQ\oplus T^*Q}(x),
\end{equation*}
 for all $x\in\mathcal{K}$.

\section{Degenerate Lagrangian Systems with Nonholonomic Constraints}
\label{sec:WDCS} If one accurately models a mechanical system,
then one usually obtains a non-degenerate Lagrangian, since the
kinetic energy of the system is usually written as a
positive-definite quadratic form in their velocity components.
However, for a complex mechanical system consisting of many moving
parts, one can often ignore the masses and/or moments of inertia
of relatively light parts of the system in order to simplify the
analysis. This turns out to be an effective way of modeling
complex systems; for example, one usually models the strings of a
puppet as massless moving parts (see, e.g., \citet{JoMu2007} and
\citet{MuEg2007}). With such an approximation, the Lagrangian
often turns out to be degenerate, and thus the Euler--Lagrange or
Lagrange--d'Alembert equations do not give the dynamics of the
massless parts directly; instead, it is determined by mechanical
constraints. In other words, the system may be considered as a
hybrid of dynamics and kinematics.

We are particularly interested in systems with degenerate
Lagrangians and {\em nonholonomic} constraints, because they
possess the two very features that Lagrange--Dirac systems can
(and are designed to) incorporate but the standard Lagrangian or
Hamiltonian formulation cannot.

In this section, we introduce a class of mechanical systems with
degenerate Lagrangians and nonholonomic constraints with symmetry
that yield non-degenerate almost Hamiltonian systems\footnote{An {\em almost Hamiltonian system} is a generalized Hamiltonian system defined with a non-degenerate but non-closed two-form as opposed to a symplectic form (which is closed by definition)~\cite{BaSn1993, HoGa2009}.}
on the reduced
space when symmetry reduction is performed.

\subsection{Chaplygin Systems}
\label{ssec:ChaplyginSystems}
Let us start from the following definition of a well-known class
of nonholonomic systems:
\begin{definition}[Chaplygin Systems; see, e.g., \citet{Ko1992}, {\citet{Co2004}[Chapters~4 \& 5]} and \citet{HoGa2009}]
  \label{def:ChaplyginSystems}
  A nonholonomic system with Lagrangian $L$ and distribution $\Delta_{Q}$ is called a {\em Chaplygin system} if there exists a Lie group $G$ with a free and proper action on $Q$, i.e., $\Phi: G \times Q \to Q$ or $\Phi_{g}: Q \to Q$ for any $g \in G$, such that
  \begin{enumerate}[(i)]
  \item the Lagrangian $L$ and the distribution $\Delta_{Q}$ are invariant under the tangent lift of the $G$-action, i.e., $L \comp T\Phi_{g} = L$ and $T\Phi_{g}(\Delta_{Q}(q)) = \Delta_{Q}(g q)$;
  \item for each $q \in Q$, the tangent space $T_{q}Q$ is the direct sum of the constraint distribution and the tangent space to the orbit of the group action, i.e.,
    \begin{equation*}
      T_{q}Q = \Delta_{Q}(q) \oplus T_{q}\mathcal{O}_{q},
    \end{equation*}
    where $\mathcal{O}_{q}$ is the orbit through $q$ of the $G$-action on $Q$, i.e.,
    \begin{equation*}
      \mathcal{O}_{q} \defeq \setdef{ \Phi_{g}(q) \in Q }{ g \in G }.
    \end{equation*}
  \end{enumerate}
\end{definition}

This setup gives rise to the principal bundle
\begin{equation*}
  \pi: Q \to Q/G \eqdef \bar{Q}
\end{equation*}
and the connection
\begin{equation}
  \label{eq:mathcalA}
  \mathcal{A}: TQ \to \mathfrak{g},
\end{equation}
with $\mathfrak{g}$ being the Lie algebra of $G$ such that
$\ker\mathcal{A} = \Delta_{Q}$, i.e., the horizontal space of
$\mathcal{A}$ is $\Delta_Q$. Furthermore, for any $q \in Q$ and
$\bar{q} \defeq \pi(q) \in \bar{Q}$, the map
$T_{q}\pi|_{\Delta_{Q}(q)}: \Delta_{Q}(q) \to T_{\bar{q}}\bar{Q}$
is a linear isomorphism, and hence we have the horizontal lift
\begin{equation*}
  \hl^{\Delta}_{q}: T_{\bar{q}}\bar{Q} \to \Delta_{Q}(q);
  \quad
  v_{\bar{q}} \mapsto (T_{q}\pi|_{\Delta_{Q}(q)})^{-1}(v_{\bar{q}}).
\end{equation*}
We will occasionally use the following shorthand notation for
horizontal lifts:
\begin{equation*}
  v^{\rm h}_{q} \defeq \hl^{\Delta}_{q}(v_{\bar{q}}).
\end{equation*}
Then, any vector $W_{q} \in T_{q}Q$ can be decomposed into
the horizontal and vertical parts as follows:
\begin{equation*}
  W_{q} = \hor(W_{q}) + \ver(W_{q}),
\end{equation*}
with
\begin{equation*}
  \hor(W_{q}) = \hl^{\Delta}_{q}(w_{\bar{q}}),
  \qquad
  \ver(W_{q}) = (\mathcal{A}_{q}(W_{q}))_{Q}(q),
\end{equation*}
where $w_{\bar{q}} \defeq T_{q}\pi(W_{q})$ and $\xi_{Q} \in
\mathfrak{X}(Q)$ is the infinitesimal generator of $\xi \in
\mathfrak{g}$.

Suppose that the Lagrangian $L: TQ \to \R$ is of the form
\begin{equation}
  \label{eq:SimpleLagrangian}
  L(v_{q}) = \frac{1}{2}g_{q}(v_{q}, v_{q}) - V(q),
\end{equation}
where $g$ is a possibly degenerate metric on $Q$.
We may then define the reduced Lagrangian
\begin{equation*}
  \bar{L} \defeq L \comp \hl^{\Delta},
\end{equation*}
or more explicitly,
\begin{equation*}
  \bar{L}: T\bar{Q} \to \R;
  \quad
  v_{\bar{q}} \mapsto \frac{1}{2} \bar{g}_{\bar{q}}(v_{\bar{q}}, v_{\bar{q}}) - \bar{V}(\bar{q}),
\end{equation*}
where $\bar{g}$ is the metric on the reduced space $\bar{Q}$
induced by $g$ as follows:
\begin{equation*}
  \bar{g}_{\bar{q}}(v_{\bar{q}}, w_{\bar{q}})
  \defeq g_{q}\parentheses{ \hl^{\Delta}_{q}(v_{\bar{q}}), \hl^{\Delta}_{q}(w_{\bar{q}}) }
  = g_{q}( v^{\rm h}_{q}, w^{\rm h}_{q}),
\end{equation*}
and the reduced potential $\bar{V}: \bar{Q} \to \R$ is defined
such that $V = \bar{V} \comp \pi$.

\subsection{Weakly Degenerate Chaplygin Systems}
\label{ssec:WeaklyDegenerateChaplyginSystems}
The following special class of Chaplygin systems is of particular
interest in this paper:
\begin{definition}[Weakly Degenerate Chaplygin Systems]
  A Chaplygin system is said to be {\em weakly degenerate} if the Lagrangian $L: TQ \to \R$ is degenerate but the reduced Lagrangian $\bar{L}: T\bar{Q} \to \R$ is non-degenerate; more precisely, the metric $g$ is degenerate on $TQ$ but positive-definite (hence non-degenerate) when restricted to $\Delta_{Q} \subset TQ$, i.e., the triple $(Q, \Delta_{Q}, g)$ defines a sub-Riemannian manifold~(see, e.g., \citet{Mo2002}), and the induced metric $\bar{g}$ on $\bar{Q}$ is positive-definite and hence Riemannian.
\end{definition}

\begin{remark}
  This is a mathematical description of the hybrid of dynamics and kinematics mentioned above:
  The dynamics is essentially dropped to the reduced configuration manifold $\bar{Q} \defeq Q/G$, and the rest is reconstructed by the horizontal lift $\hl^{\Delta}$, which is the kinematic part defined by the (nonholonomic) constraints.
\end{remark}

\begin{remark}
  Note that the positive-definiteness of the metric $g$ on $\Delta_{Q}$ guarantees that a weakly degenerate Chaplygin system is regular in the sense of \citet{LeMa1996e} (see Proposition~II.4 therein and also \citet{LeMaMa1997}).
\end{remark}

We will look into the geometry associated with weakly degenerate
Chaplygin systems in Section~\ref{ssec:GeometryOfWDCS}.

\begin{example}[Simplified Roller Racer; see \citet{Ts1995} and \citet{KrTs2001} and {\citet{Bl2003}[Section~1.10]}]
  \label{ex:RollerRacer}
  The roller racer, shown in Fig.~\ref{fig:RollerRacer}, consists of two (main and second) planar coupled rigid bodies, each of which has a pair of wheels attached at its center of mass.
  We assume that {\em the mass of the second body is negligible, and hence so are its kinetic and rotational energies}\footnote{We note that, in the original model~\cite{Ts1995,KrTs2001}, only the kinetic energy of the second body is ignored, and its rotational energy is taken into account; one obtains a {\em non-degenerate} Lagrangian with such an approximation.}.
  Let $(x,y)$ be the coordinates of the center of mass of the main body, $\theta$ the angle of the line passing through the center of mass measured from the $x$-axis, $\phi$ the angle between the two bodies; $d_{1}$ and $d_{2}$ are the distances from centers of mass to the joint, $m_{1}$ and $I_{1}$ the mass and inertia of the main body.

  The configuration space is $Q = SE(2) \times \mathbb{S}^{1} = \{ (x, y, \theta, \phi) \}$, and the Lagrangian $L: TQ \to \R$ is given by
  \begin{equation*}
    L = \frac{1}{2}m_{1} (\dot{x}^{2} + \dot{y}^{2}) + \frac{1}{2}I_{1} \dot{\theta}^{2},
  \end{equation*}
  which is degenerate because of the massless approximation of the second body.

  \begin{figure}[htbp]
    \centering
    \includegraphics[width=.6\linewidth]{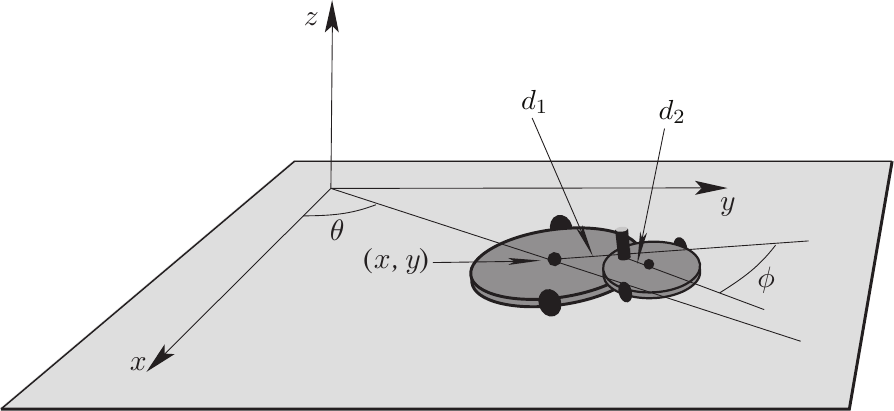}
    \caption{Roller Racer (taken from \citet{Bl2003} with permission from the author). The mass of the second body is assumed to be negligible.}
    \label{fig:RollerRacer}
  \end{figure}
  The constraints are given by
  \begin{equation}
    \label{eq:Constraints-RollerRacer}
    \dot{x} = \cos\theta\csc\phi \brackets{ (d_{1}\cos\phi + d_{2})\dot{\theta} + d_{2} \dot{\phi} },
    \qquad
    \dot{y} = \sin\theta\csc\phi \brackets{ (d_{1}\cos\phi + d_{2})\dot{\theta} + d_{2} \dot{\phi} }.
  \end{equation}
  Defining the constraint one-forms
  \begin{equation}
    \label{eq:omegas-RollerRacer}
    \omega^{1} \defeq dx - \cos\theta\csc\phi[ (d_{1}\cos\phi + d_{2})d\theta + d_{2}\,d\phi ],
    \qquad
    \omega^{2} \defeq dy - \sin\theta\csc\phi[ (d_{1}\cos\phi + d_{2})d\theta + d_{2}\,d\phi ],
  \end{equation}
  we can write the constraint distribution $\Delta_{Q} \subset TQ$ as
  \begin{equation*}
    \Delta_{Q} = \setdef{ \dot{q} = (\dot{x}, \dot{y}, \dot{\theta}, \dot{\phi}) \in TQ }{ \omega^{a}(\dot{q}) = 0,\, a = 1,2 }.
  \end{equation*}

  The Lagrange--Dirac equations~\eqref{eq:LDEq} give
  \begin{equation}
    \label{eq:LDEq-RollerRacer}
    \begin{array}{c}
      \DS
      p_{x} = m_{1} v_{x},
      \qquad
      p_{y} = m_{1} v_{y},
      \qquad
      p_{\theta} = I_{1} v_{\theta},
      \qquad
      p_{\phi} = 0,
      \medskip\\
      \DS
      \dot{x} = \cos\theta\csc\phi \brackets{ (d_{1}\cos\phi + d_{2})\dot{\theta} + d_{2} \dot{\phi} },
      \qquad
      \dot{y} = \sin\theta\csc\phi \brackets{ (d_{1}\cos\phi + d_{2})\dot{\theta} + d_{2} \dot{\phi} },
      \medskip\\
      \DS
      \dot{x} = v_{x},
      \qquad
      \dot{y} = v_{y},
      \qquad
      \dot{\theta} = v_{\theta},
      \qquad
      \dot{\phi} = v_{\phi},
      \medskip\\
      \DS
      \dot{p}_{x} = \lambda \sin\theta,
      \qquad
      \dot{p}_{y} = -\lambda \cos\theta,
      \qquad
      \dot{p}_{\theta} = 0,
      \qquad
      \dot{p}_{\phi} = 0,
    \end{array}
  \end{equation}
  where $\lambda$ is the Lagrange multiplier.

  Let $G = \R^{2}$ and consider the action of $G$ on $Q$ by translations on the $x$-$y$ plane, i.e.,
  \begin{equation*}
    G \times Q \to Q;
    \quad
    \parentheses{(a, b), (x, y, \theta, \phi)} \mapsto (x + a, y + b, \theta, \phi).
  \end{equation*}
  Then, the tangent space to the group orbit is given by
  \begin{equation*}
    T_{q}\mathcal{O}(q) = \Span\braces{ \pd{}{x}, \pd{}{y} },
  \end{equation*}
  with $q = (x, y, \theta, \phi)$.
  It is easy to check that this defines a Chaplygin system in the sense of Definition~\ref{def:ChaplyginSystems}.
  The quotient space is $\bar{Q} \defeq Q/G = \{(\theta, \phi)\}$, and the horizontal lift $\hl^{\Delta}$ is
  \begin{equation*}
    \hl^{\Delta}_{q}(\dot{\theta}, \dot{\phi}) = \parentheses{
      \cos\theta\csc\phi \brackets{ (d_{1}\cos\phi + d_{2})\dot{\theta} + d_{2} \dot{\phi} },
      \sin\theta\csc\phi \brackets{ (d_{1}\cos\phi + d_{2})\dot{\theta} + d_{2} \dot{\phi} },
      \dot{\theta}, \dot{\phi}
    }.
  \end{equation*}
  Hence, the reduced Lagrangian $\bar{L}: T\bar{Q} \to \R$ is given by
  \begin{equation}
    \label{eq:barL-RollerRacer}
    \bar{L} = \frac{1}{2} m_{1} \parentheses{ d_{1} \dot{\theta} \cos\phi + d_{2}(\dot{\theta} + \dot{\phi}) }^{2} \csc^{2}\phi + \frac{1}{2}I_{1} \dot{\theta}^{2},
  \end{equation}
  which is non-degenerate; hence the simplified roller racer is a weakly degenerate Chaplygin system.

  Therefore, the dynamics of the variables $\theta$ and $\phi$ are specified by the equations of motion, which together with the (nonholonomic) constraints, Eq.~\eqref{eq:Constraints-RollerRacer}, determine the time evolution of the variables $x$ and $y$.
\end{example}

\begin{example}[Bicycle; see \citet{Ge1994}, \citet{GeMa1995}, and \citet{KoMa1997c}]
  \label{ex:Bicycle}
  Consider the simplified model of a bicycle shown in Fig.~\ref{fig:Bicycle}:
  For the sake of simplicity, the wheels are assumed to be massless, and the mass $m$ of the bicycle is considered to be concentrated at a single point; however we take into account the moment of inertia of the steering wheel.

  The configuration space is $Q = SE(2) \times \mathbb{S}^{1} \times \mathbb{S}^{1} = \{ (x, y, \theta, \phi, \psi) \}$; the variables $x$, $y$, $\theta$, and $\psi$ are defined as in Fig.~\ref{fig:Bicycle} and $\phi \defeq \tan \sigma/b$; also let $J(\phi,\psi)$ be the moment of inertia associated with the steering action.
  The Lagrangian $L: TQ \to \R$ is given by
  \begin{multline*}
    L = \frac{m}{2}\left[
      (\cos\theta\,\dot{x} + \sin\theta\,\dot{y} + a \sin\psi\,\dot{\theta})^{2} +
      (\sin\theta\,\dot{x} - \cos\theta\,\dot{y} + a\cos\psi\,\dot{\psi} -c\,\dot{\theta})^{2}
    \right.
    \\
    \left. + a^{2}\sin\psi\,\dot{\psi}^{2}
    \right]
    + \frac{J(\phi,\psi)}{2}\,\dot{\phi}^{2} - m g a \cos\psi,
  \end{multline*}
  which is degenerate.
  \begin{figure}[htbp]
    \centering
    \includegraphics[width=.6\linewidth]{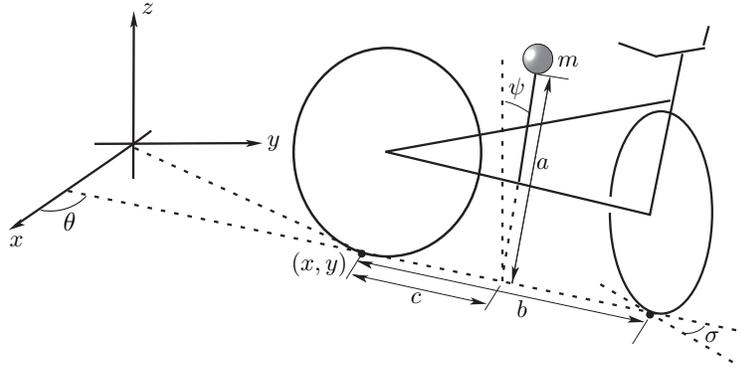}
    \caption{Bicycle (taken from \citet{KoMa1997c} with permission from Wang Sang Koon).}
    \label{fig:Bicycle}
  \end{figure}
  The constraints are given by
  \begin{equation*}
    \dot{\theta} = \phi ( \cos\theta\,\dot{x} + \sin\theta\,\dot{y} ),
    \qquad
    \sin\theta\,\dot{x} - \cos\theta\,\dot{y} = 0.
  \end{equation*}
  Defining the constraint one-forms
  \begin{equation*}
    \omega^{1} \defeq \phi ( \cos\theta\,dx + \sin\theta\,dy ),
    \qquad
    \omega^{2} \defeq \sin\theta\,dx - \cos\theta\,dy,
  \end{equation*}
  we can write the constraint distribution $\Delta_{Q} \subset TQ$ as
  \begin{equation*}
    \Delta_{Q} = \setdef{ \dot{q} = (\dot{x}, \dot{y}, \dot{\theta}, \dot{\phi}, \dot{\psi}) \in TQ }{ \omega^{a}(\dot{q}) = 0,\, a = 1,2 }.
  \end{equation*}

  Let $G = \R^{2}$ and consider the action of $G$ on $Q$ by translations on the $x$-$y$ plane, i.e.,
  \begin{equation*}
    G \times Q \to Q;
    \quad
    \parentheses{(a, b), (x, y, \theta, \phi, \psi)} \mapsto (x + a, y + b, \theta, \phi, \psi).
  \end{equation*}
  Then, the tangent space to the group orbit is given by
  \begin{equation*}
    T_{q}\mathcal{O}(q) = \Span\braces{ \pd{}{x}, \pd{}{y} },
  \end{equation*}
  with $q = (x, y, \theta, \phi, \psi)$.
  It is easy to check that this defines a Chaplygin system in the sense of Definition~\ref{def:ChaplyginSystems}.
  The quotient space is $\bar{Q} \defeq Q/G = \{(\theta, \phi, \psi)\}$, and the horizontal lift $\hl^{\Delta}$ is
  \begin{equation*}
    \hl^{\Delta}_{q}(\dot{\theta}, \dot{\phi}, \dot{\psi}) = \parentheses{
      \frac{\dot{\theta}}{\phi}\,\cos\theta,
      \frac{\dot{\theta}}{\phi}\,\sin\theta,
      \dot{\theta},
      \dot{\phi},
      \dot{\psi}
    }.
  \end{equation*}
  Hence, the reduced Lagrangian $\bar{L}: T\bar{Q} \to \R$ is given by
  \begin{equation*}
    \bar{L} = \frac{m}{2}\brackets{
      (c\,\dot{\theta} - a \cos\psi\,\dot{\psi})^{2}
      + \frac{(\dot{\theta} + a \sin\psi\,\dot{\theta})^{2}}{\phi^{2}}
      + a^{2}\sin\psi\,\dot{\psi}^{2}
    }
    + \frac{J(\phi,\psi)}{2}\,\dot{\phi}^{2} - m g a \cos\psi,
  \end{equation*}
  which is non-degenerate, and so this is a weakly degenerate Chaplygin system as well.
\end{example}

\section{Hamilton--Jacobi Theory for Lagrange--Dirac systems}
\label{sec:DiracHJ}
\subsection{Hamilton--Jacobi Theorem for Lagrange--Dirac systems}
We now state the main theorem of this paper, which relates the
dynamics of the Lagrange--Dirac system with what we refer to as
the {\em Dirac--Hamilton--Jacobi equation}.
\begin{theorem}[Dirac--Hamilton--Jacobi Theorem]
  \label{thm:DiracHJ}
  Suppose that a Lagrangian $L: TQ \to \R$ and a distribution $\Delta_{Q} \subset TQ$ are given.
  Define $\Upsilon: Q \to TQ\oplus T^{*}Q$ by
  \begin{equation*}
    \Upsilon(q) \defeq \mathcal{X}(q) \oplus \gamma(q),
  \end{equation*}
  with a vector field $\mathcal{X}: Q \to TQ$ and a one-form $\gamma: Q \to T^{*}Q$, and assume that it satisfies
  \begin{equation}
    \label{assmptn:Upsilon}
    \Upsilon(q) \in \mathcal{K}_{q} \text{ for any } q \in Q,
  \end{equation}
  and
  \begin{equation}
    \label{eq:dgamma}
    d\gamma|_{\Delta_{Q}} = 0, \ \text{i.e.,}\ d\gamma(v,w) = 0 \text{ for any } v, w \in \Delta_{Q}.
  \end{equation}
  Then, the following are equivalent:
  \begin{enumerate}
    \renewcommand{\theenumi}{\roman{enumi}}
    \renewcommand{\labelenumi}{(\theenumi)}
  \item \label{enumi:DiracHJ-1-i}
    For every integral curve $c(t)$ of $\mathcal{X}$, i.e., for every curve $c: \R \to Q$ satisfying
    \begin{equation}
      \label{eq:DiracHJ-curve}
      \dot{c}(t) = \mathcal{X}( c(t) ),
    \end{equation}
    the curve $t \mapsto \Upsilon \comp c(t) = (\mathcal{X} \oplus \gamma)\comp c(t)$ is an integral curve of the Lagrange--Dirac equations~\eqref{eq:LDEq}.
    \smallskip
  \item \label{enumi:DiracHJ-1-ii}
    $\Upsilon$ satisfies the following {\em Dirac--Hamilton--Jacobi equation}:
    \begin{equation}
      \label{eq:DiracHJ}
      d(\mathcal{E} \comp \Upsilon) \in \Delta_{Q}^{\circ},
    \end{equation}
    or, if $Q$ is connected and $\Delta_{Q}$ is completely nonholonomic\footnote{A distribution $\Delta_{Q} \subset TQ$ is said to be {\em completely nonholonomic} (or {\em bracket-generating}) if $\Delta_{Q}$ along with all of its iterated Lie brackets $[\Delta_{Q}, \Delta_{Q}], [\Delta_{Q}, [\Delta_{Q}, \Delta_{Q}]], \dots$ spans the tangent bundle $TQ$. See, e.g., \citet{VeGe1988} and \citet{Mo2002}.},
    \begin{equation}
      \label{eq:DiracHJ-2}
      \mathcal{E} \comp \Upsilon = E,
    \end{equation}
    with a constant $E$.
  \end{enumerate}
\end{theorem}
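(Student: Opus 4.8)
The plan is to reduce the whole equivalence to a single one of the four Lagrange--Dirac equations~\eqref{eq:LDEq} and then compare that equation, term by term, with the Dirac--Hamilton--Jacobi equation~\eqref{eq:DiracHJ}. Working in local coordinates $q=(q^i)$, I would first unpack the standing hypothesis~\eqref{assmptn:Upsilon}: writing $\mathcal{X}(q)=(q,\mathcal{X}^i(q))$ and $\gamma(q)=(q,\gamma_i(q))$, the condition $\Upsilon(q)\in\mathcal{K}_q$ is, by the description~\eqref{eq:mathcalK} of $\mathcal{K}$, exactly
\begin{equation*}
  \mathcal{X}(q)\in\Delta_Q(q),
  \qquad
  \gamma_i(q)=\pd{L}{v^i}(q,\mathcal{X}(q)),
\end{equation*}
i.e.\ $\gamma=\mathbb{F}L\comp\mathcal{X}$. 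Given an integral curve $c$ of $\mathcal{X}$, set $q(t)=c(t)$, $v(t)=\mathcal{X}(c(t))$, $p(t)=\gamma(c(t))$. The first three Lagrange--Dirac equations then hold automatically: the momentum relation $p=\tpd{L}{v}$ is the second identity above, while $\dot q=\dot c=\mathcal{X}(c)=v\in\Delta_Q(q)$ gives both $\dot q=v$ and $\dot q\in\Delta_Q(q)$. Hence statement~\eqref{enumi:DiracHJ-1-i} is equivalent to the fourth equation $\dot p-\tpd{L}{q}\in\Delta_Q^\circ(q)$ holding along every integral curve of $\mathcal{X}$; since $\mathcal{X}$ is a genuine vector field, an integral curve passes through each point, so this is equivalent to the pointwise assertion that the one-form $\dot p-\tpd{L}{q}$ (which depends on $c$ only through $\dot c=\mathcal{X}$) lies in $\Delta_Q^\circ$ everywhere on $Q$.

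The key step is an explicit comparison of this fourth equation with~\eqref{eq:DiracHJ}. Differentiating $\mathcal{E}\comp\Upsilon(q)=\gamma_i(q)\mathcal{X}^i(q)-L(q,\mathcal{X}(q))$ and using $\gamma_i=\tpd{L}{v^i}$ to cancel the two terms proportional to $\partial\mathcal{X}^i/\partial q^j$, I obtain
\begin{equation*}
  \pd{(\mathcal{E}\comp\Upsilon)}{q^j}
  =\pd{\gamma_i}{q^j}\,\mathcal{X}^i-\pd{L}{q^j}(q,\mathcal{X}(q)).
\end{equation*}
On the other hand, along $c$ one has $\dot p_j=(\partial\gamma_j/\partial q^k)\,\mathcal{X}^k$, so the $j$-th component of the fourth equation is $(\partial\gamma_j/\partial q^k)\mathcal{X}^k-(\partial L/\partial q^j)(q,\mathcal{X})$. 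Subtracting, the difference is
\begin{equation*}
  \parentheses{\dot p-\pd{L}{q}}_j-\pd{(\mathcal{E}\comp\Upsilon)}{q^j}
  =\parentheses{\pd{\gamma_j}{q^k}-\pd{\gamma_k}{q^j}}\mathcal{X}^k
  =(\iota_{\mathcal{X}}d\gamma)_j,
\end{equation*}
that is, $\dot p-\tpd{L}{q}=d(\mathcal{E}\comp\Upsilon)+\iota_{\mathcal{X}}d\gamma$ as one-forms on $Q$.

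To finish, I would invoke the two hypotheses~\eqref{assmptn:Upsilon} and~\eqref{eq:dgamma} together. Since $\mathcal{X}(q)\in\Delta_Q(q)$, the contraction $\iota_{\mathcal{X}}d\gamma$ evaluated on any $w\in\Delta_Q$ equals $d\gamma(\mathcal{X},w)=0$ by~\eqref{eq:dgamma}; hence $\iota_{\mathcal{X}}d\gamma\in\Delta_Q^\circ$. Because the one-forms $\dot p-\tpd{L}{q}$ and $d(\mathcal{E}\comp\Upsilon)$ differ by an element of $\Delta_Q^\circ$, one lies in $\Delta_Q^\circ$ if and only if the other does, which is precisely the equivalence of~\eqref{enumi:DiracHJ-1-i} with the first form~\eqref{eq:DiracHJ} of~\eqref{enumi:DiracHJ-1-ii}. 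For the second form~\eqref{eq:DiracHJ-2}, note that $d(\mathcal{E}\comp\Upsilon)\in\Delta_Q^\circ$ says $\mathcal{E}\comp\Upsilon$ is annihilated by every section of $\Delta_Q$; then for $X,Y\in\Delta_Q$ the identity $[X,Y](\mathcal{E}\comp\Upsilon)=X(Y(\mathcal{E}\comp\Upsilon))-Y(X(\mathcal{E}\comp\Upsilon))=0$ shows it is also annihilated by all iterated brackets, so by complete nonholonomicity $d(\mathcal{E}\comp\Upsilon)$ vanishes on all of $TQ$ and $\mathcal{E}\comp\Upsilon$ is constant on the connected manifold $Q$.

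I expect the main obstacle to be the bookkeeping in the key computation of the previous paragraph: securing the exact cancellation of the $\partial\mathcal{X}/\partial q$ terms and correctly recognizing the surviving skew combination as $\iota_{\mathcal{X}}d\gamma$ rather than a closely related but incorrect expression. This is precisely the step where the hypothesis $\gamma=\mathbb{F}L\comp\mathcal{X}$ and the two-form $d\gamma$ interact, and it is what renders the non-closedness of $\gamma$ harmless so long as it is controlled on $\Delta_Q$.
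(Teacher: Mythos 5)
Your proposal is correct and follows essentially the same route as the paper's proof: the identity $\dot p-\tpd{L}{q}=d(\mathcal{E}\comp\Upsilon)+\iota_{\mathcal{X}}d\gamma$, with $\iota_{\mathcal{X}}d\gamma\in\Delta_Q^\circ$ forced by $\mathcal{X}\in\Delta_Q$ and $d\gamma|_{\Delta_Q}=0$, is exactly the computation the paper performs (there split into the two implications, with the $\iota_{\mathcal{X}}d\gamma$ discrepancy removed by the index swap $\partial\gamma_j/\partial q^i\,\mathcal{X}^i w^j=\partial\gamma_i/\partial q^j\,\mathcal{X}^i w^j$ for $w\in\Delta_Q$). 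The only difference is that you spell out the bracket-generating argument for passing from $d(\mathcal{E}\comp\Upsilon)\in\Delta_Q^\circ$ to $\mathcal{E}\comp\Upsilon=E$, which the paper delegates to a citation of \citet{OhBl2009}.
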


\begin{proof}
  Let us first show that \eqref{enumi:DiracHJ-1-ii} implies \eqref{enumi:DiracHJ-1-i}.
  Assume \eqref{enumi:DiracHJ-1-ii} and let $c(t)$ be an integral curve of $\mathcal{X}$, and then set
  \begin{equation*}
    v(t) \oplus p(t) \defeq \Upsilon \comp c(t) = (\mathcal{X} \oplus \gamma) \comp c(t).
  \end{equation*}
  Then, clearly $v(t) = \dot{c}(t) = \mathcal{X}(c(t))$.
  Also, Eq.~\eqref{assmptn:Upsilon} implies that
  \begin{equation*}
    v(t) \in \Delta_{Q}(c(t)),
    \quad
    p(t) = \pd{L}{v}(q(t),v(t)).
  \end{equation*}
  So it remains to show $\dot{p} - \tpd{L}{q} \in \Delta_{Q}^{\circ}$.
  To that end, first calculate
  \begin{equation*}
    \dot{p}_{j}(t) = \od{}{t}\gamma_{j} \comp c(t)
    = \pd{\gamma_{j}}{q^{i}}(c(t))\,\dot{c}^{i}(t)
    = \pd{\gamma_{j}}{q^{i}}(c(t))\,\mathcal{X}^{i}(c(t))
  \end{equation*}
  and so, for any $w \in \Delta_{Q}$, we have
  \begin{equation}
    \dot{p}_{j}(t) w^{j} = \pd{\gamma_{j}}{q^{i}}(c(t))\,\mathcal{X}^{i}(c(t))\,w^{j} = \pd{\gamma_{i}}{q^{j}}(c(t))\,\mathcal{X}^{i}(c(t))\,w^{j},
    \label{eq:dotp,w}
  \end{equation}
  since Eq.~\eqref{eq:dgamma} implies, for any $v, w \in \Delta_{Q}$,
  \begin{equation*}
    \pd{\gamma_{i}}{q^{j}}v^{i}w^{j} = \pd{\gamma_{j}}{q^{i}}v^{i}w^{j},
  \end{equation*}
  and also Eq.~\eqref{assmptn:Upsilon} gives $\mathcal{X}(q) \in \Delta_{Q}(q)$.
  On the other hand,
  \begin{align*}
    d( \mathcal{E} \comp \Upsilon) &= d( \gamma_{i}(q)\, \mathcal{X}^{i}(q) - L(q,\mathcal{X}(q)) )
    \\
    &= \parentheses{
      \pd{\gamma_{i}}{q^{j}}\mathcal{X}^{i} + \gamma_{i} \pd{\mathcal{X}^{i}}{q^{j}}
      - \pd{L}{q^{j}} - \pd{L}{v^{i}}\pd{\mathcal{X}^{i}}{q^{j}}
    }dq^{j}
    \\
    &= \parentheses{
      \pd{\gamma_{i}}{q^{j}}\mathcal{X}^{i} - \pd{L}{q^{j}}
      }dq^{j},
  \end{align*}
  where we used the following relation that follows from Eq.~\eqref{assmptn:Upsilon}:
  \begin{equation*}
    \gamma(q) = \pd{L}{v}(q, \mathcal{X}(q)).
  \end{equation*}
  So the Dirac--Hamilton--Jacobi equation~\eqref{eq:DiracHJ} with Eq.~\eqref{eq:dotp,w} implies
  \begin{align*}
    d( \mathcal{E} \comp \Upsilon)(c(t)) \cdot w &= \parentheses{
      \pd{\gamma_{i}}{q^{j}}(c(t))\, \mathcal{X}^{i}(c(t)) - \pd{L}{q^{j}}(c(t),v(t))
      }w^{j}
      \\
      &= \parentheses{
      \dot{p}_{j}(t) - \pd{L}{q^{j}}(c(t),v(t))
      }w^{j}
      = 0.
  \end{align*}
  Since $w \in \Delta_{Q}$ is arbitrary, this implies
  \begin{equation*}
    \dot{p}(t) - \pd{L}{q}(c(t),v(t)) \in \Delta_{Q}^{\circ}.
  \end{equation*}
  Therefore, \eqref{enumi:DiracHJ-1-i} is satisfied.

  Conversely, assume \eqref{enumi:DiracHJ-1-i}; let $c(t)$ be a curve in $Q$ that satisfies Eq.~\eqref{eq:DiracHJ-curve} and set $v(t) \oplus p(t) \defeq \Upsilon \comp c(t) = (\mathcal{X} \oplus \gamma) \comp c(t)$.
  Then, by assumption, $(c(t), v(t), p(t))$ is an integral curve of the Lagrange--Dirac system~\eqref{eq:LDS}, and so
  \begin{equation*}
    \dot{p}(t) - \pd{L}{q}(c(t),v(t)) \in \Delta_{Q}^{\circ}(c(t)).
  \end{equation*}
  Following the same calculations as above we have, for any $w \in \Delta_{Q}$,
  \begin{equation*}
    d( \mathcal{E} \comp \Upsilon )(c(t)) \cdot w
    = \parentheses{
      \dot{p}_{j}(t) - \pd{L}{q^{j}}(c(t),v(t))
    }w^{j}
    = 0.
  \end{equation*}
  For an arbitrary point $q \in Q$, we can consider an integral curve $c(t)$ of $X$ such that $c(0) = q$.
  Therefore, the above equation implies that $d( \mathcal{E} \comp \Upsilon )(q) \cdot w_{q} = 0$ for any $q \in Q$ and $w_{q} \in \Delta_{Q}(q)$, which gives the Dirac--Hamilton--Jacobi equation~\eqref{eq:DiracHJ}.
  If $Q$ is connected and $\Delta_{Q}$ is completely nonholonomic, then by the same argument as in the proof of Theorem~3.1 in \citet{OhBl2009}, $d( \mathcal{E} \comp \Upsilon ) \in \Delta_{Q}^{\circ}$ reduces to $\mathcal{E} \comp \Upsilon = E$ for some constant $E$.
\end{proof}

Theorem~\ref{thm:DiracHJ} can be recast in the context of
Section~\ref{ssec:LDSonPontryaginBundle} as follows:
\begin{corollary}
  Under the same conditions as in Theorem~\ref{thm:DiracHJ}, the following are equivalent:
  \begin{enumerate}
    \renewcommand{\theenumi}{\roman{enumi}}
    \renewcommand{\labelenumi}{(\theenumi)}
  \item \label{enumi:DiracHJ-2-i}
    For every curve $c(t)$ such that
    \begin{equation*}
      \dot{c}(t)=T\pr_Q \cdot \tilde{X}(\Upsilon \comp c(t)),
    \end{equation*}
    the curve $t \mapsto \Upsilon \comp c(t)$ is an integral curve of $\tilde{X}$, and so it is an integral curve of the Lagrange--Dirac equations~\eqref{eq:LDEq}.
    \medskip
  \item \label{enumi:DiracHJ-2-ii}
    $\Upsilon$ satisfies $(0, d(\mathcal{E}\comp\Upsilon\comp \pr_Q)) \in D_{TQ \oplus T^{*}Q}$, or equivalently, $d(\mathcal{E}\comp\Upsilon\comp \pr_Q) \in \Delta_{TQ\oplus T^*Q}^{\circ}$.
  \end{enumerate}
\end{corollary}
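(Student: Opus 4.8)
The plan is to derive this corollary from Theorem~\ref{thm:DiracHJ} by showing that each of its two conditions is equivalent to the corresponding condition there; the whole content then reduces to two ``translation'' steps connecting the $T^{*}Q$ picture of Theorem~\ref{thm:DiracHJ} with the Pontryagin-bundle picture of Section~\ref{ssec:LDSonPontryaginBundle}. Concretely, I would prove (a) that condition~\eqref{enumi:DiracHJ-2-ii} is equivalent to the Dirac--Hamilton--Jacobi equation~\eqref{eq:DiracHJ}, and (b) that the curve equation in \eqref{enumi:DiracHJ-2-i} coincides with the integral-curve equation~\eqref{eq:DiracHJ-curve} for $\mathcal{X}$, after which the equivalence of the targets (``integral curve of $\tilde{X}$'' versus ``integral curve of the Lagrange--Dirac equations~\eqref{eq:LDEq}'') is exactly the content of the Theorem and Corollary of Section~\ref{ssec:LDSonPontryaginBundle}.

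For part (a) I would work with the local representation of $D_{TQ\oplus T^{*}Q}$. Since $\mathcal{E}\comp\Upsilon\comp\pr_Q$ depends on the base point only through $\pr_Q$, its differential has no $dv$- or $dp$-components, so $d(\mathcal{E}\comp\Upsilon\comp\pr_Q) = \pr_Q^{*}\,d(\mathcal{E}\comp\Upsilon)$ has $\tilde{\alpha}_{v}=0$ and $\tilde{\alpha}_{p}=0$ automatically. Feeding the zero vector field into the local description of $D_{TQ\oplus T^{*}Q}$ then forces precisely $\tilde{\alpha}_{q}\in\Delta_{Q}^{\circ}$, i.e. $d(\mathcal{E}\comp\Upsilon)\in\Delta_{Q}^{\circ}$, which is \eqref{eq:DiracHJ}. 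The ``equivalently'' clause follows from $\Delta_{TQ\oplus T^{*}Q}=(T\pr_Q)^{-1}(\Delta_{Q})$: since $T\pr_Q$ maps the fibres of $\Delta_{TQ\oplus T^{*}Q}$ onto $\Delta_{Q}$, a pulled-back covector $\pr_Q^{*}\alpha$ annihilates $\Delta_{TQ\oplus T^{*}Q}$ iff $\alpha$ annihilates $\Delta_{Q}$, so $d(\mathcal{E}\comp\Upsilon\comp\pr_Q)\in\Delta_{TQ\oplus T^{*}Q}^{\circ}\iff d(\mathcal{E}\comp\Upsilon)\in\Delta_{Q}^{\circ}$. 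Thus \eqref{enumi:DiracHJ-2-ii} is equivalent to condition~\eqref{enumi:DiracHJ-1-ii} of Theorem~\ref{thm:DiracHJ}.

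For part (b) I would first rewrite the driving vector field. Using $\pr_Q=\pi_Q\comp\pr_{T^{*}Q}$, hence $T\pr_Q=T\pi_Q\comp T\pr_{T^{*}Q}$, together with the defining property $T\pr_{T^{*}Q}(\tilde{X}(x))=X(\pr_{T^{*}Q}(x))$ and the fact that $\pr_{T^{*}Q}\comp\Upsilon=\gamma$, the curve condition in \eqref{enumi:DiracHJ-2-i} becomes $\dot{c}(t)=T\pi_Q\bigl(X(\gamma\comp c(t))\bigr)$. Now $\Upsilon\comp c(t)\in\mathcal{K}$ by \eqref{assmptn:Upsilon}, so $\mathcal{X}(c(t))\in\Delta_{Q}$ and $\gamma(c(t))=\mathbb{F}L(\mathcal{X}(c(t)))$; reading off the $\dot{q}$-component of the Lagrange--Dirac field $X$ from \eqref{eq:LDEq} (the constrained Legendre velocity) then gives $T\pi_Q(X(\gamma\comp c(t)))=\mathcal{X}(c(t))$, so the curve equation reduces to \eqref{eq:DiracHJ-curve}. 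Finally, the Theorem and Corollary of Section~\ref{ssec:LDSonPontryaginBundle} show that $\Upsilon\comp c$ (which equals $\rho_{T^*TQ}\comp dL(c,\mathcal{X}\comp c)$) is an integral curve of $\tilde{X}$ on $\mathcal{K}$ iff its projection $\gamma\comp c$ is an integral curve of $X$ on $P$, i.e. iff $\Upsilon\comp c$ solves \eqref{eq:LDEq}; hence \eqref{enumi:DiracHJ-2-i} is equivalent to condition~\eqref{enumi:DiracHJ-1-i} of Theorem~\ref{thm:DiracHJ}. Combining (a) and (b) with Theorem~\ref{thm:DiracHJ} yields the asserted equivalence.

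I expect the main obstacle to be the identification $T\pr_Q(\tilde{X}(\Upsilon(q)))=\mathcal{X}(q)$ in part (b): it requires matching the $\dot{q}$-component of the Lagrange--Dirac vector field $X$ against the prescribed $\mathcal{X}$ via membership in $\mathcal{K}$. This is transparent when the constrained Legendre transform $\mathbb{F}L|_{\Delta_Q}\colon\Delta_{Q}\to P$ is injective (as for the regular weakly degenerate Chaplygin systems that motivate the paper), but it must be phrased carefully so as to use only the property $T\pr_{T^{*}Q}\comp\tilde{X}=X\comp\pr_{T^{*}Q}$ recalled in Section~\ref{ssec:LDSonPontryaginBundle} and not to presuppose uniqueness of $\tilde{X}$.
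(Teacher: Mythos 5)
Your proposal is correct and follows essentially the same route as the paper: the paper likewise reduces the corollary to Theorem~\ref{thm:DiracHJ} via the identity $T\pr_{Q}\comp\tilde{X}\comp\Upsilon=\mathcal{X}$ (which it dismisses as a coordinate calculation, and which your part~(b) spells out, correctly handling the degenerate-Legendre subtlety through membership in $\mathcal{K}$) and via the relation $\pr_{Q}^{*}(\Delta_{Q}^{\circ})=\Delta_{TQ\oplus T^{*}Q}^{\circ}$ together with $d(\mathcal{E}\comp\Upsilon\comp\pr_{Q})=\pr_{Q}^{*}d(\mathcal{E}\comp\Upsilon)$, exactly as in your part~(a). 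The only difference is cosmetic: you additionally verify the ``$(0,d(\mathcal{E}\comp\Upsilon\comp\pr_{Q}))\in D_{TQ\oplus T^{*}Q}$'' formulation directly from the local representation of the Dirac structure, which the paper leaves implicit.
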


\begin{proof}
  The equivalence of \eqref{enumi:DiracHJ-2-i} with that of Theorem~\ref{thm:DiracHJ} follows from the relation $T\pr_{Q} \comp \tilde{X} \comp \Upsilon = \mathcal{X}$, which is easily checked by coordinate calculations.

  On the other hand, for \eqref{enumi:DiracHJ-2-ii}, first observe that $\pr_{Q}^{*}(\Delta_{Q}^{\circ}) = \Delta_{TQ \oplus T^{*}Q}^{\circ}$.
  Then, since $\pr_{Q}: TQ \oplus T^{*}Q \to Q$ is a surjective submersion, it follows that
  \begin{equation*}
    d(\mathcal{E} \comp \Upsilon) \in \Delta_{Q}^{\circ}
    \iff
    \pr_{Q}^{*} d(\mathcal{E} \comp \Upsilon) \in \pr_{Q}^{*}(\Delta_{Q}^{\circ})
    \iff
    d(\mathcal{E} \comp \Upsilon \comp \pr_{Q}) \in \Delta_{TQ \oplus T^{*}Q}^{\circ}.
  \end{equation*}
  This proves the equivalence of \eqref{enumi:DiracHJ-2-ii} with that of Theorem~\ref{thm:DiracHJ}.
\end{proof}

\subsection{Nonholonomic Hamilton--Jacobi Theory as a Special Case}
Let us show that the nonholonomic Hamilton--Jacobi equation of \citet{IgLeMa2008} and \citet{OhBl2009} follows as a special case of the above theorem.
Consider the special case where the Lagrangian $L: TQ \to \R$ is non-degenerate, i.e., the Legendre transformation $\Leg: TQ \to T^{*}Q$ is invertible.
Then, we may rewrite the definition of the submanifold $\mathcal{K} \subset TQ \oplus T^{*}Q$, Eq.~\eqref{eq:mathcalK}, by
\begin{align*}
  \mathcal{K} &= \setdef{ v_{q} \oplus p_{q} \in TQ \oplus T^{*}Q }{ v_{q} \in \Delta_{Q}(q),\ p_{q} = \Leg(v_{q}) }
  \\
  &= \setdef{ v_{q} \oplus p_{q} \in TQ \oplus T^{*}Q }{ p_{q} \in P_{q},\ v_{q} = (\Leg)^{-1}(p_{q}) }
  \\
  &= \Delta_{Q} \oplus P,
\end{align*}
where we recall that $P \defeq \Leg(\Delta_{Q})$.
It implies that if $\Upsilon = \mathcal{X} \oplus \gamma$ takes values in $\mathcal{K}$ then $\mathcal{X} = (\Leg)^{-1} \comp \gamma$, and thus
\begin{equation*}
  \mathcal{E} \comp \Upsilon(q)
  = \ip{ \gamma(q) }{ (\Leg)^{-1}(\gamma(q)) } - L \comp (\Leg)^{-1}(\gamma(q))
  = H \comp \gamma(q),
\end{equation*}
with $\gamma$ taking values in $P$ and the Hamiltonian $H: T^{*}Q \to \R$ defined by
\begin{equation*}
  H(q, p) \defeq \ip{ p_{q} }{ (\Leg)^{-1}(p_{q}) } - L \comp (\Leg)^{-1}(p_{q}).
\end{equation*}
Then, the Lagrange--Dirac equations~\eqref{eq:LDEq} become the nonholonomic Hamilton's equations
\begin{equation*}
  \dot{q} = \frac{\partial H}{\partial p}(q,p),
  \qquad
  \dot{p} + \frac{\partial H}{\partial q}(q,p) \in \Delta_Q^\circ(q).
  \qquad
  \dot{q} \in \Delta_Q(q),
\end{equation*}
or, in an intrinsic form,
\begin{equation*}
  i_{X_{H}^{\rm nh}} \Omega - dH \in \Delta_{T^{*}Q}^{\circ},
  \qquad
  T\pi_{Q}(X_{H}^{\rm nh}) \in \Delta_{Q}
\end{equation*}
for a vector field $X_{H}^{\rm nh}$ on $T^{*}Q$.
Furthermore, it is straightforward to show that
\begin{equation*}
  (\Leg)^{-1} = \mathbb{F}H = T\pi_{Q} \comp X_{H},
\end{equation*}
where $X_{H}$ is the Hamiltonian vector field of the unconstrained system with the same Hamiltonian, i.e., $i_{X_{H}}\Omega = dH$; hence we obtain
\begin{equation*}
  \mathcal{X}(q) = (\Leg)^{-1} \comp \gamma(q) =  T\pi_{Q} \cdot X_{H}(\gamma(q)).
\end{equation*}
Therefore, Theorem~\ref{thm:DiracHJ} specializes to the nonholonomic Hamilton--Jacobi theorem of \citet{IgLeMa2008} and \citet{OhBl2009}:

\begin{corollary}[Nonholonomic Hamilton--Jacobi~\cite{IgLeMa2008, OhBl2009}]
  Consider a nonholonomic system defined on a configuration manifold $Q$ with a Lagrangian of the form Eq.~\eqref{eq:SimpleLagrangian} and a nonholonomic constraint distribution $\Delta_{Q} \subset TQ$.
  Let $\gamma: Q \to T^{*}Q$ be a one-form that satisfies
  \begin{equation*}
    \gamma(q) \in P_{q} \text{ for any } q \in Q,
  \end{equation*}
  and
  \begin{equation*}
    d\gamma|_{\Delta_{Q}} = 0, \text{ i.e., } d\gamma(v,w) = 0 \text{ for any } v, w \in \Delta_{Q}.
  \end{equation*}
  Then, the following are equivalent:
  \begin{enumerate}
    \renewcommand{\theenumi}{\roman{enumi}}
    \renewcommand{\labelenumi}{(\theenumi)}
  \item
    For every curve $c(t)$ in $Q$ satisfying
    \begin{equation*}
      \dot{c}(t) = T\pi_{Q} \cdot X_{H}( \gamma \comp c(t) ),
    \end{equation*}
    the curve $t \mapsto \gamma \comp c(t)$ is an integral curve of $X_{H}^{\rm nh}$, where $X_{H}$ is the Hamiltonian vector field of the unconstrained system with the same Hamiltonian, i.e., $i_{X_{H}}\Omega = dH$.
    \medskip
  \item
    The one-form $\gamma$ satisfies the {\em nonholonomic Hamilton--Jacobi equation}:
    \begin{equation*}
      d(H \comp \gamma) \in \Delta_{Q}^{\circ},
    \end{equation*}
    or, if $Q$ is connected and $\Delta_{Q}$ is completely nonholonomic,
    \begin{equation*}
      H \comp \gamma = E,
    \end{equation*}
    with a constant $E$.
  \end{enumerate}
\end{corollary}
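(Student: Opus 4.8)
The plan is to deduce this corollary directly from Theorem~\ref{thm:DiracHJ} by feeding in the identifications established in the preceding discussion, rather than repeating the argument from scratch. Given the one-form $\gamma$ supplied by the hypotheses, I would first set $\mathcal{X} \defeq (\Leg)^{-1} \comp \gamma$ and $\Upsilon \defeq \mathcal{X} \oplus \gamma$, which is precisely the shape forced on any admissible $\Upsilon$ once $L$ is non-degenerate. With this choice I would verify that the two hypotheses of Theorem~\ref{thm:DiracHJ} hold. Condition~\eqref{assmptn:Upsilon}, namely $\Upsilon(q) \in \mathcal{K}_{q}$, follows from the identification $\mathcal{K} = \Delta_{Q} \oplus P$ together with the assumption $\gamma(q) \in P_{q}$ and the relation $\mathcal{X} = (\Leg)^{-1} \comp \gamma$; condition~\eqref{eq:dgamma}, $d\gamma|_{\Delta_{Q}} = 0$, is assumed verbatim in the corollary. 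Thus Theorem~\ref{thm:DiracHJ} applies to this $\Upsilon$.

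Next I would translate each of the two equivalent conditions of Theorem~\ref{thm:DiracHJ} into the Hamiltonian language of the corollary using the identities already derived. For condition~\eqref{enumi:DiracHJ-1-ii}, the computation $\mathcal{E} \comp \Upsilon(q) = H \comp \gamma(q)$ turns the Dirac--Hamilton--Jacobi equation $d(\mathcal{E} \comp \Upsilon) \in \Delta_{Q}^{\circ}$ into the nonholonomic Hamilton--Jacobi equation $d(H \comp \gamma) \in \Delta_{Q}^{\circ}$, and likewise $\mathcal{E} \comp \Upsilon = E$ becomes $H \comp \gamma = E$ in the connected, completely nonholonomic case. For condition~\eqref{enumi:DiracHJ-1-i}, the relation $\mathcal{X}(q) = T\pi_{Q} \cdot X_{H}(\gamma(q))$, which comes from $(\Leg)^{-1} = \mathbb{F}H = T\pi_{Q} \comp X_{H}$, rewrites the defining equation $\dot{c}(t) = \mathcal{X}(c(t))$ of the base curve as $\dot{c}(t) = T\pi_{Q} \cdot X_{H}(\gamma \comp c(t))$, matching the hypothesis imposed on $c$ in the corollary.

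It then remains to identify the conclusion ``$\Upsilon \comp c(t)$ is an integral curve of the Lagrange--Dirac equations~\eqref{eq:LDEq}'' with ``$\gamma \comp c(t)$ is an integral curve of $X_{H}^{\rm nh}$.'' Here I would use the computation carried out just above the corollary: under non-degeneracy the Lagrange--Dirac equations~\eqref{eq:LDEq} for $(c(t), v(t), p(t)) = \Upsilon \comp c(t)$ are equivalent to the nonholonomic Hamilton's equations $\dot{q} = \tpd{H}{p}$, $\dot{p} + \tpd{H}{q} \in \Delta_{Q}^{\circ}$, $\dot{q} \in \Delta_{Q}$, that is, to $i_{X_{H}^{\rm nh}}\Omega - dH \in \Delta_{T^{*}Q}^{\circ}$ with $T\pi_{Q}(X_{H}^{\rm nh}) \in \Delta_{Q}$ holding along $\gamma \comp c(t)$, which is exactly the statement that $\gamma \comp c(t)$ is an integral curve of $X_{H}^{\rm nh}$. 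With these three substitutions the equivalence of the two items of the corollary becomes identical to the equivalence of \eqref{enumi:DiracHJ-1-i} and \eqref{enumi:DiracHJ-1-ii} in Theorem~\ref{thm:DiracHJ}, and the corollary follows. I do not anticipate a genuine obstacle: all the analytic content lives in Theorem~\ref{thm:DiracHJ} and in the non-degeneracy computations preceding the corollary, so the only care needed is bookkeeping---checking that the $\mathcal{K} = \Delta_{Q} \oplus P$ identification makes the hypotheses correspond and that the Legendre-transform identities convert each curve-level statement correctly.
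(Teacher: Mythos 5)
Your proposal is correct and follows essentially the same route as the paper: the paper likewise obtains this corollary by specializing Theorem~\ref{thm:DiracHJ} via the identifications $\mathcal{K} = \Delta_{Q} \oplus P$, $\mathcal{X} = (\Leg)^{-1} \comp \gamma$, $\mathcal{E} \comp \Upsilon = H \comp \gamma$, and $(\Leg)^{-1} = \mathbb{F}H = T\pi_{Q} \comp X_{H}$, together with the observation that the Lagrange--Dirac equations reduce to the nonholonomic Hamilton's equations in the non-degenerate case. No gaps.
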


\subsection{Applications to Degenerate Lagrangian System with Holonomic Constraints}
If the constraints are holonomic, then the distribution
$\Delta_{Q} \subset TQ$ is integrable, and so there exists a local
submanifold $S \subset Q$ such that $T_{s}S = \Delta_{Q}(s)$ for
any $s \in S$. Let $\iota_{S}: S \hookrightarrow Q$ be the
inclusion. Then, the Dirac--Hamilton--Jacobi
equation~\eqref{eq:DiracHJ} gives
\begin{equation*}
  \iota_{S}^{*} d(\mathcal{E} \comp \Upsilon) \in (TS)^{\circ} = 0,
\end{equation*}
and thus
\begin{equation*}
  d(\mathcal{E} \comp \Upsilon \comp \iota_{S}) = 0,
\end{equation*}
which implies that we have
\begin{equation}
  \label{eq:DiracHJ-Holonomic}
  \mathcal{E} \comp \Upsilon \comp \iota_{S} = E,
\end{equation}
with a constant $E$, assuming $S$ is connected.

On the other hand, the condition~\eqref{eq:dgamma} becomes
\begin{equation}
  \label{eq:dgamma-Holonomic}
  \iota_{S}^{*} d\gamma = d(\gamma \comp \iota_{S}) = 0,
\end{equation}
and so $\gamma \comp \iota_{S} = dW$ for some function $W$ defined
locally on $S$.

\begin{example}[LC circuit; see \citet{YoMa2006a,YoMa2007b}]
  Consider the LC circuit shown in Fig.~\ref{fig:LCC}.
  \begin{figure}[htbp]
    \centering
    \includegraphics[width=.4\linewidth]{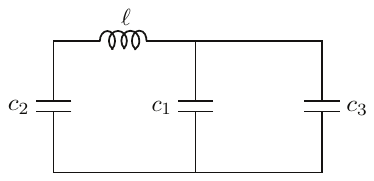}
    \caption{LC circuit (see \citet{YoMa2006a}).}
    \label{fig:LCC}
  \end{figure}
  The configuration space is the 4-dimensional vector space $Q = \{(q_{\ell}, q_{c_{1}}, q_{c_{2}}, q_{c_{3}})\}$, which represents charges in the circuit elements. Then $TQ \cong Q \times Q$ and $f_{q} = (f_{\ell}, f_{c_{1}}, f_{c_{2}}, f_{c_{3}}) \in T_{q}Q$ represents the currents in the corresponding circuit elements.
  The Lagrangian $L: TQ \to \R$ is given by
  \begin{equation*}
    L(q,f) = \frac{1}{2} \ell\,f_{\ell}^{2} - \frac{1}{2}\frac{q_{c_{1}}^{2}}{c_{1}} - \frac{1}{2}\frac{q_{c_{2}}^{2}}{c_{2}} - \frac{1}{2}\frac{q_{c_{3}}^{2}}{c_{3}},
  \end{equation*}
  which is clearly degenerate.

  The generalized energy $\mathcal{E}: TQ \oplus T^{*}Q \to \R$ is
  \begin{align*}
    \mathcal{E}(q,f,p) &=  p \cdot f - L(q,f)
    \\
    &= p_{\ell}f_{\ell} + p_{c_{1}}f_{c_{1}} + p_{c_{2}}f_{c_{2}} + p_{c_{3}}f_{c_{3}}
    - \frac{1}{2} \ell\,f_{\ell}^{2} + \frac{1}{2}\frac{q_{c_{1}}^{2}}{c_{1}} + \frac{1}{2}\frac{q_{c_{2}}^{2}}{c_{2}} + \frac{1}{2}\frac{q_{c_{3}}^{2}}{c_{3}}.
  \end{align*}

  The Kirchhoff current law gives the constraints $-f_{\ell} + f_{c_{2}} = 0$ and $f_{c_{1}} - f_{c_{2}} + f_{c_{3}} = 0$, or in terms of constraint one-forms, $\omega^{1} = -dq_{\ell} + dq_{c_{2}}$ and $\omega^{2} = dq_{c_{1}} - dq_{c_{2}} + dq_{c_{3}}$.
  Thus, the constraint distribution $\Delta_{Q} \subset TQ$ is given by
  \begin{equation*}
    \Delta_{Q} = \setdef{ f \in TQ }{ \omega^{a}(f) = 0,\, a = 1,2 }.
  \end{equation*}
  So the submanifold $\mathcal{K} \subset TQ \oplus T^{*}Q$ is
  \begin{equation*}
    \mathcal{K} =
    \setdef{ (q, f, p) \in TQ \oplus T^{*}Q }{
      f_{\ell} = f_{c_{2}},\   f_{c_{2}} = f_{c_{1}} + f_{c_{3}},\
      p_{\ell} = \ell\,f_{\ell},\  p_{c_{1}} = p_{c_{2}} = p_{c_{3}} = 0
      }.
  \end{equation*}
  Hence, the generalized energy constrained to $\mathcal{K}$ is
  \begin{equation*}
    \mathcal{E}|_{\mathcal{K}} = \frac{1}{2} \ell\,f_{\ell}^{2} + \frac{1}{2}\frac{q_{c_{1}}^{2}}{c_{1}} + \frac{1}{2}\frac{q_{c_{2}}^{2}}{c_{2}} +
    \frac{1}{2}\frac{q_{c_{3}}^{2}}{c_{3}}.
  \end{equation*}

  Notice that the constraints are holonomic, i.e., the constraints can be integrated to give
  \begin{equation*}
    q_{\ell} - q_{c_{2}} = a_{0},
    \quad
    q_{c_{2}} - q_{c_{1}} - q_{c_{3}} = a_{1},
  \end{equation*}
  with some constants $a_{0}$ and $a_{1}$.
  So we define a submanifold $S \subset Q$ by
  \begin{equation*}
    S \defeq \setdef{ (q_{\ell}, q_{c_{1}}, q_{c_{2}}, q_{c_{3}}) \in Q }{  q_{c_{2}} = q_{\ell} - a_{0},\; q_{c_{3}} = q_{c_{2}} - q_{c_{1}} - a_{1} } = \{ (q_{\ell}, q_{c_{1}} ) \},
  \end{equation*}
  and the inclusion
  \begin{equation*}
    \iota_{S}: S \hookrightarrow Q;
    \quad
    (q_{\ell}, q_{c_{1}}) \mapsto ( q_{\ell}, q_{c_{1}}, q_{\ell} - a_{0}, q_{c_{2}} - q_{c_{1}} - a_{1} ).
  \end{equation*}

  Now, the Dirac--Hamilton--Jacobi equation for holonomic systems, Eq.~\eqref{eq:DiracHJ-Holonomic}, gives
  \begin{equation}
    \label{eq:DiracHJ-LCC}
    \mathcal{E} \comp \Upsilon \comp \iota_{S} = E,
  \end{equation}
  with some constant $E$, where $\Upsilon \comp \iota_{S}: S \to TQ \oplus T^{*}Q$ is
  \begin{equation*}
    \Upsilon \comp \iota_{S}(q_{\ell}, q_{c_{1}})
    = \parentheses{
      q_{\ell}, q_{c_{1}}, \tilde{\mathcal{X}}(q_{\ell}, q_{c_{1}}), \tilde{\gamma}(q_{\ell}, q_{c_{1}})
      }
  \end{equation*}
  with $\tilde{\mathcal{X}} \defeq \mathcal{X} \comp \iota_{S}: S \to TQ$ and $\tilde{\gamma} \defeq \gamma \comp \iota_{S}: S \to T^{*}Q$ given by
  \begin{align*}
    \tilde{\mathcal{X}}(q_{\ell}, q_{c_{1}})
    &= \parentheses{
      \tilde{\mathcal{X}}_{\ell}(q_{\ell}, q_{c_{1}}), \tilde{\mathcal{X}}_{c_{1}}(q_{\ell}, q_{c_{1}}), \tilde{\mathcal{X}}_{c_{2}}(q_{\ell}, q_{c_{1}}), \tilde{\mathcal{X}}_{c_{3}}(q_{\ell}, q_{c_{1}})
    },
    \\
    \tilde{\gamma}(q_{\ell}, q_{c_{1}})
    &= \parentheses{
      \tilde{\gamma}_{\ell}(q_{\ell}, q_{c_{1}}), \tilde{\gamma}_{c_{1}}(q_{\ell}, q_{c_{1}}), \tilde{\gamma}_{c_{2}}(q_{\ell}, q_{c_{1}}), \tilde{\gamma}_{c_{3}}(q_{\ell}, q_{c_{1}})
    }.
  \end{align*}
  The condition $\Upsilon \comp \iota_{S}(q_{\ell}, q_{c_{1}}) \in \mathcal{K}$ implies
  \begin{equation*}
    \tilde{\mathcal{X}}_{\ell} = \tilde{\mathcal{X}}_{c_{2}},
    \quad
    \tilde{\mathcal{X}}_{c_{2}} = \tilde{\mathcal{X}}_{c_{1}} + \tilde{\mathcal{X}}_{c_{3}},
    \quad
    \tilde{\gamma}_{\ell} = \ell\,\tilde{\mathcal{X}}_{\ell},
    \quad
    \tilde{\gamma}_{c_{1}} = \tilde{\gamma}_{c_{2}} = \tilde{\gamma}_{c_{3}} = 0.
  \end{equation*}
  Then,
  \begin{equation*}
    \tilde{\gamma} = \gamma \comp \iota_{S} = \ell\,\tilde{\mathcal{X}}_{\ell}(q_{\ell}, q_{c_{1}})\,dq_{\ell},
  \end{equation*}
  and thus condition~\eqref{eq:dgamma-Holonomic} gives
  \begin{equation*}
    \pd{\tilde{\mathcal{X}}_{\ell}}{q_{c_{1}}} = 0,
  \end{equation*}
  and hence $\tilde{\mathcal{X}}_{\ell}(q_{\ell}, q_{c_{1}}) = \tilde{\mathcal{X}}_{\ell}(q_{\ell})$.
  The Dirac--Hamilton--Jacobi equation~\eqref{eq:DiracHJ-LCC} then becomes
  \begin{equation}
    \label{eq:DiracHJ-LCC-1}
    \frac{1}{2} \ell\, \tilde{\mathcal{X}}_{\ell}(q_{\ell})^{2}
    + \frac{1}{2}\frac{q_{c_{1}}^{2}}{c_{1}}
    + \frac{1}{2}\frac{(q_{\ell} - a_{0})^{2}}{c_{2}}
    + \frac{1}{2}\frac{(q_{\ell} - q_{c_{1}} - a_{0} - a_{1})^{2}}{c_{3}}
    = E.
  \end{equation}
  We impose the condition that $\mathcal{X}_{\ell} = 0$ when $q_{\ell} = q_{c_{1}} = 0$ and $E = 0$, which corresponds to the case where nothing is happening in the circuit.
  Then, we have
  \begin{equation*}
    \frac{a_{0}^{2}}{c_{2}}
    + \frac{(a_{0} + a_{1})^{2}}{c_{3}}
    = 0,
  \end{equation*}
  which gives $a_{0} = a_{1} = 0$, since $c_{2}$ and $c_{3}$ are both positive.
  Therefore, Eq.~\eqref{eq:DiracHJ-LCC-1} becomes
  \begin{equation}
    \label{eq:DiracHJ-LCC-2}
    \frac{1}{2} \ell\, \tilde{\mathcal{X}}_{\ell}(q_{\ell})^{2}
    + \frac{1}{2}\frac{q_{c_{1}}^{2}}{c_{1}}
    + \frac{1}{2}\frac{q_{\ell}^{2}}{c_{2}}
    + \frac{1}{2}\frac{(q_{\ell} - q_{c_{1}})^{2}}{c_{3}}
    = E.
  \end{equation}
  Taking the derivative with respect to $q_{c_{1}}$ of both sides and solving for $q_{c_{1}}$, we have
  \begin{equation*}
    q_{c_{1}} = \frac{c_{1}}{c_{1} + c_{3}}\, q_{\ell}.
  \end{equation*}
  Substituting this into Eq.~\eqref{eq:DiracHJ-LCC-2} gives
  \begin{equation*}
    \frac{1}{2} \parentheses{
      \ell\,\tilde{\mathcal{X}}_{\ell}(q_{\ell})^{2}
      + \frac{c_{1} + c_{2} + c_{3}}{c_{2}(c_{1} + c_{3})}\,q_{\ell}^{2}
    }
    = E.
  \end{equation*}
  Solving for $\tilde{\mathcal{X}}_{\ell}(q_{\ell})$, we obtain
  \begin{equation*}
    \mathcal{X}_{\ell}(q) = \tilde{\mathcal{X}}_{\ell}(q_{\ell}) = \pm \sqrt{ \frac{1}{\ell} \parentheses{ 2E - \frac{c_{1} + c_{2} + c_{3}}{c_{2}(c_{1} + c_{3})}\,q_{\ell}^{2} } }.
  \end{equation*}
  Taking the positive root, Eq.~\eqref{eq:DiracHJ-curve} for $q_{\ell}$ gives
  \begin{equation*}
    \dot{q}_{\ell} = \sqrt{ \frac{1}{\ell} \parentheses{ 2E - \frac{c_{1} + c_{2} + c_{3}}{c_{2}(c_{1} + c_{3})}\,q_{\ell}^{2} } },
  \end{equation*}
  which can be solved easily:
  \begin{equation*}
    q_{\ell}(t) = \sqrt{\frac{2E}{\ell\,\nu^{2}}}\,\sin(\nu t + \alpha),
  \end{equation*}
  where
  \begin{equation*}
    \nu \defeq \sqrt{ \frac{c_{1} + c_{2} + c_{3}}{c_{2}(c_{1} + c_{3})\,\ell} }
  \end{equation*}
  and $\alpha$ is a phase constant to be determined by the initial condition.
\end{example}

\begin{remark}
  In the conventional LC circuit theory, one often simplifies problems by ``combining'' capacitors.
  Using this technique, the above example simplifies to an LC circuit with an inductor with inductance $\ell$ and a single capacitance $C$, that satisfies the following equation:
  \begin{equation*}
    \frac{1}{C} = \frac{1}{c_{2}} + \frac{1}{c_{1} + c_{3}},
  \end{equation*}
  which gives
  \begin{equation*}
    C = \frac{c_{2}(c_{1} + c_{3})}{c_{1} + c_{2} + c_{3}}.
  \end{equation*}
  Then, the equation for the current $i_{\ell} \defeq \dot{q}_{\ell}$ is given by
  \begin{equation*}
    \ell\,\od{^{2}{i}_{\ell}}{t^{2}} + \frac{i_{\ell}}{C} = 0,
  \end{equation*}
  or
  \begin{equation*}
    \od{^{2}{i}_{\ell}}{t^{2}} + \nu\, i_{\ell} = 0,
  \end{equation*}
  with
  \begin{equation*}
    \nu = \frac{1}{\sqrt{\ell\,C}} = \sqrt{ \frac{c_{1} + c_{2} + c_{3}}{c_{2}(c_{1} + c_{3})\,\ell} },
  \end{equation*}
  which coincides the one defined above.
  The general solution of the above ODE is
  \begin{equation*}
    i_{\ell}(t) = \dot{q}_{\ell}(t) = A \sin(\nu t + \alpha)
  \end{equation*}
  for some constants $A$ and $\alpha$.
  Therefore, our solution is consistent with the conventional theory.
\end{remark}

\subsection{Applications to Degenerate Lagrangian System with Nonholonomic Constraints}
\begin{example}[Simplified Roller Racer; see Example~\ref{ex:RollerRacer}]
  \label{ex:RollerRacer2}
  The submanifold $\mathcal{K} \subset TQ \oplus T^{*}Q$ is given by
  \begin{multline*}
    \mathcal{K} = \Bigl\{ (q, v, p) \in TQ \oplus T^{*}Q
    \ |\ 
    v_{x} = \cos\theta\csc\phi[ (d_{1}\cos\phi + d_{2})\,v_{\theta} + d_{2} v_{\phi} ],
    \\
    v_{y} = \sin\theta\csc\phi[ (d_{1}\cos\phi + d_{2})\,v_{\theta} + d_{2} v_{\phi} ],\;
    p_{x} = m_{1} v_{x},\;
    p_{y} = m_{1} v_{y},\;
    p_{\theta} = I_{1} v_{\theta},\;
    p_{\phi} = 0
    \Bigr\},
  \end{multline*}
  and the generalized energy constrained to $\mathcal{K}$ is
  \begin{equation*}
    \mathcal{E}|_{\mathcal{K}}
    = \frac{1}{2} m_{1}\csc^{2}\phi \brackets{ (d_{1}\cos\phi + d_{2})\,v_{\theta} + d_{2} v_{\phi} }^{2}
    +  \frac{1}{2} I_{1} v_{\theta}^{2}.
  \end{equation*}

  The distribution $\Delta_{Q}$ is easily shown to be completely nonholonomic, and thus we may use the Dirac--Hamilton--Jacobi equation \eqref{eq:DiracHJ-2}, which gives
  \begin{equation}
    \label{eq:DiracHJ-RollerRacer}
    \frac{1}{2} m_{1}\csc^{2}\phi \brackets{ (d_{1}\cos\phi + d_{2})\,\mathcal{X}_{\theta}(q) + d_{2} \mathcal{X}_{\phi}(q) }^{2}
    +  \frac{1}{2} I_{1} \mathcal{X}_{\theta}(q)^{2} = E.
  \end{equation}

  Now, we assume the following ansatz\footnote{The $(x,y)$-dependence is eliminated because we expect that the vector field $\mathcal{X}$ to be $\R^{2}$-translational invariant since the system possesses $\R^{2}$-symmetry.}:
  \begin{equation}
    \label{eq:ansatz1-RollerRacer}
    \mathcal{X}_{\theta}(x, y, \theta, \phi) = \mathcal{X}_{\theta}(\theta, \phi),
    \qquad
    \mathcal{X}_{\phi}(x, y, \theta, \phi) = \mathcal{X}_{\phi}(\phi).
  \end{equation}
  However, substituting them into Eq.~\eqref{eq:DiracHJ-RollerRacer} and solving for $\mathcal{X}_{\theta}$ shows that $\mathcal{X}_{\theta}$ does not depend on $\theta$ either; hence we set $\mathcal{X}_{\theta}(\theta, \phi) = \mathcal{X}_{\theta}(\phi)$.
  Then, solving Eq.~\eqref{eq:DiracHJ-RollerRacer} for $\mathcal{X}_{\phi}$, we have
  \begin{equation}
    \label{eq:mathcalXphi-RollerRacer}
    \mathcal{X}_{\phi}(\phi) = \frac{ -(d_{1}\cos\phi + d_{2}) \mathcal{X}_{\theta}(\phi) \pm \sin\phi \sqrt{2E - I_{1}\mathcal{X}_{\theta}(\phi)^{2}} }{\sqrt{m_{1}}\,d_{2}}.
  \end{equation}
  Substituting the first solution into condition~\eqref{eq:dgamma}, we have
  \begin{equation*}
    \brackets{
      (d_{1}\cos\phi + d_{2})\,\mathcal{X}_{\theta}(\phi) - \sin\phi \sqrt{ 2E - I_{1}\mathcal{X}_{\theta}(\phi)^{2} }
    } \mathcal{X}_{\theta}'(\phi) = 0.
  \end{equation*}
  We choose $\mathcal{X}_{\theta}'(\phi) = 0$ and hence
  \begin{equation*}
    \mathcal{X}_{\theta}(\phi) = v_{\theta},
  \end{equation*}
  where $v_{\theta}$ is the initial angular velocity in the $\theta$-direction.
  This is consistent with the Lagrange--Dirac equations~\eqref{eq:LDEq-RollerRacer}, which give $\ddot{\theta} = 0$.
  Substituting this into the first case of Eq.~\eqref{eq:mathcalXphi-RollerRacer}, we obtain
  \begin{equation*}
    \mathcal{X}_{\phi}(\phi) = -v_{\theta} \parentheses{ 1 + \frac{d_{1}}{d_{2}} \cos\phi } + \frac{v_{r}}{d_{2}} \sin\phi,
  \end{equation*}
  where $v_{r} \defeq \sqrt{ (2E - I_{1} v_{\theta}^{2})/m_{1}}$.
  
  Then, the condition $\mathcal{X}(q) \in \Delta_{Q}(q)$ gives the other components of the vector field $\mathcal{X}$, and hence Eq.~\eqref{eq:DiracHJ-curve} gives
  \begin{equation*}
    \begin{array}{c}
      \dot{x} = v_{r} \cos\theta,
      \qquad
      \dot{y} = v_{r} \sin\theta,
      \bigskip\\
      \DS \dot{\theta} = 0,
      \qquad
      \DS \dot{\phi} = -v_{\theta} \parentheses{ 1 + \frac{d_{1}}{d_{2}} \cos\phi } + \frac{v_{r}}{d_{2}}
      \sin\phi.
    \end{array}
  \end{equation*}
  We can solve the last equation by separation of variables, and the rest is explicitly solvable.
\end{example}

\subsection{Lagrangians that are Linear in Velocity}
\label{ssec:Limitation} There are some physical systems, such as point vortices~(see, e.g., \citet{Ch1978} and \citet{Ne2001}), which are described by Lagrangians that are linear in velocity, i.e.,
\begin{equation}
  \label{eq:linearLagrangian}
  L(q, \dot{q}) = \ip{ \alpha(q) }{ \dot{q} } - h(q),
\end{equation}
where $\alpha$ is a one-form on $Q$.
The Lagrangian is clearly degenerate and Lagrange--Dirac equations~\eqref{eq:LDEq} give the following equations of motion (see \citet{RoMa2002} and \citet{YoMa2007b}):
\begin{equation}
  \label{eq:alpha-h}
  -i_{\mathcal{X}} d\alpha = dh,
\end{equation}
where $\mathcal{X}$ is a vector field on $Q$; hence the Lagrange--Dirac equations~\eqref{eq:LDEq} reduce to the first-order dynamics $\dot{q} = \mathcal{X}(q)$ defined on $Q$.

Now, the assumption in \eqref{assmptn:Upsilon} of Theorem~\ref{thm:DiracHJ} implies $\gamma(q) = \alpha(q)$ and thus
\begin{equation*}
  \mathcal{E} \comp \Upsilon(q) = h(q);
\end{equation*}
so the Dirac--Hamilton--Jacobi equation~\eqref{eq:DiracHJ} gives
\begin{equation*}
  h(q) = E,
\end{equation*}
which simply defines a level set of the energy of the dynamics on $Q$, i.e., the Dirac--Hamilton--Jacobi equation~\eqref{eq:DiracHJ-2} does not give any information on the dynamics on $Q$.
This is because the original dynamics, which is naturally defined on $Q$ with the one-form $\alpha$ and the function $h$, is somewhat artificially lifted to the tangent bundle $TQ$ through the linear Lagrangian~\eqref{eq:linearLagrangian}.
In fact, for point vortices on the plane, one has $Q = \R^{2}$, and the two-form $-d\alpha$ is a symplectic form; hence $Q = \R^{2}$ is a symplectic manifold and Eq.~\eqref{eq:alpha-h} defines a Hamiltonian system on $Q$ with the Hamiltonian $h$.

\section{Hamilton--Jacobi Theory for Weakly Degenerate Chaplygin Systems}
\label{sec:DiracHJ-WDCS}
In this section, we first show that a weakly Chaplygin system introduced in Section~\ref{ssec:WeaklyDegenerateChaplyginSystems} reduces to an almost Hamiltonian system on $T^{*}\bar{Q}$ with a reduced Hamiltonian $\bar{H}: T^{*}\bar{Q} \to \R$, where $\bar{Q} \defeq Q/G$.
Accordingly, we may consider a variant of the nonholonomic Hamilton--Jacobi equation~\cite{IgLeMa2008,OhBl2009} for the reduced system, which we call the {\em reduced Dirac--Hamilton--Jacobi equation}.
We then show an explicit formula that maps solutions of the reduced Dirac--Hamilton--Jacobi equation to those of the original one.
Thus, one may solve the reduced Dirac--Hamilton--Jacobi equation, which is simpler than the original one, and then construct solutions of the original Dirac--Hamilton--Jacobi equation by the formula.
\subsection{The Geometry of Weakly Degenerate Chaplygin Systems}
\label{ssec:GeometryOfWDCS}
For weakly degenerate Chaplygin systems, the geometric structure introduced in Section~\ref{ssec:ChaplyginSystems} is carried over to the Hamiltonian side.
Specifically, we define the horizontal lift
$\hl^{P}_{q}: T_{\bar{q}}^{*}\bar{Q} \to P_{q}$ by~(see
\citet{EhKoMoRi2004})
\begin{equation*}
  \hl^{P}_{q}
  \defeq \Leg_{q} \comp \hl^{\Delta}_{q} \comp (\F\bar{L})^{-1}_{\bar{q}},
\end{equation*}
or by requiring that the diagram below commutes.
\begin{equation*}
  \vcenter{
    \xymatrix@!0@R=0.675in@C=1in{
      \Delta_{Q}(q) \ar[r]^{\Leg_{q}} & P_{q}
      \\
      T_{\bar{q}}\bar{Q} \ar[u]^{\hl^{\Delta}_{q}} & T_{\bar{q}}^{*}\bar{Q} \ar[l]^{\ (\mathbb{F}\bar{L})^{-1}_{\bar{q}}} \ar@{-->}[u]_{\hl^{P}_{q}}
    }
  }
\end{equation*}
It is easy to show that the following equality holds for the
pairing between the two horizontal lifts (see Lemma~A.1 in
\citet{OhFeBlZe2011}): For any $\alpha_{\bar{q}} \in
T^{*}_{\bar{q}}\bar{Q}$ and $v_{\bar{q}} \in T_{\bar{q}}\bar{Q}$,
\begin{equation}
  \label{eq:hl-pairing}
  \ip{ \hl^{P}_{q}(\alpha_{\bar{q}}) }{ \hl^{\Delta}_{q}(v_{\bar{q}}) } = \ip{ \alpha_{\bar{q}} }{ v_{\bar{q}} }.
\end{equation}
We also define a map $\hl^{\mathcal{K}}_{q}:
T_{\bar{q}}^{*}\bar{Q} \to \mathcal{K}_{q} \subset T_{q}Q \oplus
T^{*}_{q}Q$ by
\begin{equation*}
  \hl^{\mathcal{K}}_{q} \defeq \parentheses{ \hl^{\Delta}_{q} \comp (\F\bar{L})^{-1}_{\bar{q}} } \oplus \hl^{P}_{q}.
\end{equation*}

Since the reduced Lagrangian $\bar{L}$ is non-degenerate, we can
also define the reduced Hamiltonian\footnote{Recall that we cannot
define a Hamiltonian $H: T^{*}Q \to \R$ for the original system
because the original Lagrangian $L: TQ \to \R$ is degenerate.}
$\bar{H}: T^{*}\bar{Q} \to \R$ as follows:
\begin{equation}
  \label{eq:barH}
  \bar{H}(p_{\bar{q}}) \defeq \ip{ p_{\bar{q}} }{ v_{\bar{q}} } - \bar{L}(v_{\bar{q}}),
\end{equation}
with $v_{\bar{q}} = (\F\bar{L})^{-1}_{\bar{q}}(p_{\bar{q}})$.

\begin{lemma}
  \label{lem:mathcalE-barH}
  The generalized energy $\mathcal{E}: TQ \oplus T^{*}Q \to \R$ and the reduced Hamiltonian $\bar{H}$ are related as follows:
  \begin{equation*}
    \mathcal{E} \comp \hl^{\mathcal{K}} = \bar{H}.
  \end{equation*}
\end{lemma}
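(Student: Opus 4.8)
The plan is to verify the identity pointwise by unwinding the definitions of the three horizontal lifts together with the pairing identity \eqref{eq:hl-pairing}. Fix $\bar{q} \in \bar{Q}$, a representative $q$ in the fiber $\pi^{-1}(\bar{q})$, and a covector $p_{\bar{q}} \in T^{*}_{\bar{q}}\bar{Q}$, and set $v_{\bar{q}} \defeq (\F\bar{L})^{-1}_{\bar{q}}(p_{\bar{q}}) \in T_{\bar{q}}\bar{Q}$. By the definition of $\hl^{\mathcal{K}}_{q} = (\hl^{\Delta}_{q}\comp(\F\bar{L})^{-1}_{\bar{q}}) \oplus \hl^{P}_{q}$, the image $\hl^{\mathcal{K}}_{q}(p_{\bar{q}})$ is the triple $(q,v,p) \in \mathcal{K}_{q}$ with $v = \hl^{\Delta}_{q}(v_{\bar{q}})$ and $p = \hl^{P}_{q}(p_{\bar{q}})$.

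Next I would evaluate the generalized energy at this triple. By its definition $\mathcal{E}(q,v,p) = \ip{p}{v} - L(v)$, so it remains to identify the two terms $\ip{p}{v}$ and $L(v)$ with their reduced counterparts. For the pairing term I apply \eqref{eq:hl-pairing} with $\alpha_{\bar{q}} = p_{\bar{q}}$, obtaining $\ip{p}{v} = \ip{\hl^{P}_{q}(p_{\bar{q}})}{\hl^{\Delta}_{q}(v_{\bar{q}})} = \ip{p_{\bar{q}}}{v_{\bar{q}}}$. For the Lagrangian term I use the very definition of the reduced Lagrangian $\bar{L} = L \comp \hl^{\Delta}$, which gives $L(v) = L(\hl^{\Delta}_{q}(v_{\bar{q}})) = \bar{L}(v_{\bar{q}})$. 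Combining the two substitutions yields $\mathcal{E}\comp\hl^{\mathcal{K}}_{q}(p_{\bar{q}}) = \ip{p_{\bar{q}}}{v_{\bar{q}}} - \bar{L}(v_{\bar{q}})$, which is exactly the expression defining $\bar{H}(p_{\bar{q}})$ in \eqref{eq:barH}, since $v_{\bar{q}} = (\F\bar{L})^{-1}_{\bar{q}}(p_{\bar{q}})$. As $\bar{q}$, $q$, and $p_{\bar{q}}$ were arbitrary, this establishes $\mathcal{E}\comp\hl^{\mathcal{K}} = \bar{H}$.

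I do not anticipate a genuine obstacle: the result is essentially a bookkeeping computation, and the only nontrivial inputs are the pairing identity \eqref{eq:hl-pairing}, quoted from Lemma~A.1 of \citet{OhFeBlZe2011}, and the defining relation $\bar{L} = L\comp\hl^{\Delta}$. The one conceptual point worth noting is that, although $\hl^{\mathcal{K}}_{q}$ depends on the choice of representative $q \in \pi^{-1}(\bar{q})$, the computation above produces $\bar{H}(p_{\bar{q}})$, which is manifestly independent of $q$; this $q$-independence is precisely what the $G$-invariance of $L$ and $\Delta_{Q}$ in the Chaplygin hypothesis guarantees, and it is what makes $\mathcal{E}\comp\hl^{\mathcal{K}}$ descend to a well-defined function on $T^{*}\bar{Q}$.
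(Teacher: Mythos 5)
Your proof is correct and follows essentially the same route as the paper's: unwind $\hl^{\mathcal{K}}_{q}$ into its two factors, apply the pairing identity~\eqref{eq:hl-pairing} to the first term and the definition $\bar{L} = L \comp \hl^{\Delta}$ to the second, and recognize the result as $\bar{H}$. The closing remark on independence of the representative $q$ is a worthwhile observation that the paper leaves implicit, but it does not change the argument.
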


\begin{proof}
  Follows from the following simple calculation: For an arbitrary $\alpha_{\bar{q}} \in T^{*}_{\bar{q}}\bar{Q}$, let $q \in Q$ be a point such that $\pi(q) = \bar{q}$.
  Then, we obtain
  \begin{align*}
    \mathcal{E} \comp \hl^{\mathcal{K}}_{q}( \alpha_{\bar{q}} )
    &= \ip{ \hl^{P}_{q}(\alpha_{\bar{q}}) }{ \hl^{\Delta}_{q} \comp (\F\bar{L})^{-1}_{\bar{q}}(\alpha_{\bar{q}}) }
    - L \comp \hl^{\Delta}_{q} \comp (\F\bar{L})^{-1}_{\bar{q}}(\alpha_{\bar{q}})
    \\
    &= \ip{ \alpha_{\bar{q}} }{ (\F\bar{L})^{-1}_{\bar{q}}(\alpha_{\bar{q}}) }
    - \bar{L} \comp (\F\bar{L})^{-1}_{\bar{q}}(\alpha_{\bar{q}})
    \\
    &= \bar{H}(\alpha_{\bar{q}}),
  \end{align*}
  where we used Eq.~\eqref{eq:hl-pairing} and the definition of $\bar{H}$ in Eq.~\eqref{eq:barH}.
\end{proof}

Furthermore, as shown in Theorem~\ref{thm:barD} of Appendix~\ref{sec:RedWDCS}~(see also \citet{Ko1992}, \citet{BaSn1993}, \citet{CaLeMaDi1999}, \citet{HoGa2009}), we have the reduced system
\begin{equation}
  \label{eq:ReducedWDCS}
  i_{\bar{X}} \bar{\Omega}^{\rm nh} = d\bar{H}
\end{equation}
on $T^{*}\bar{Q}$ defined with the reduced Hamiltonian $\bar{H}$ and the almost symplectic form
\begin{equation}
  \label{eq:barOmega^nh}
  \bar{\Omega}^{\rm nh} \defeq \bar{\Omega} - \Xi,
\end{equation}
where $\Xi$ is the non-closed two-form on $T^{*}\bar{Q}$ defined in Eq.~\eqref{eq:Xi-def}.

\subsection{Hamilton--Jacobi Theorem for Weakly Degenerate Chaplygin Systems}
The previous subsection showed that a weakly degenerate Chaplygin
system reduces to a non-degenerate Lagrangian and hence an
almost Hamiltonian system~\eqref{eq:ReducedWDCS}. Moreover,
Lemma~\ref{lem:mathcalE-barH} shows how the generalized energy
$\mathcal{E}$ is related to the reduced Hamiltonian $\bar{H}$; see
also the upper half of the diagram~\eqref{dia:Upsilon-bargamma} below.
The lower half of the diagram suggests the relationship between the reduced and original Dirac--Hamilton--Jacobi equations alluded above: Specifically, $\bar{\gamma}$ is a one-form on $\bar{Q} \defeq Q/G$ and is a solution of the reduced Dirac--Hamilton--Jacobi equation~\eqref{eq:ReducedNHHJ} defined below, and the diagram suggests how to define the map $\Upsilon: Q \to \mathcal{K}$ so that it is a solution of the original Dirac--Hamilton--Jacobi equation~\eqref{eq:DiracHJ}.
\begin{equation}
  \label{dia:Upsilon-bargamma}
  \vcenter{
    \xymatrix@!0@R=0.6in@C=0.51975in{
      & \R &
      \\
      \mathcal{K} \ar[ru]^{\mathcal{E}\!\!} & & T^{*}\bar{Q} \ar[ul]_{\!\!\bar{H}} \ar[ll]^{\hl^{\mathcal{K}}}
      \\
      Q \ar[rr]_{\pi} \ar@{-->}[u]^{\Upsilon} & & \bar{Q} \ar[u]_{\bar{\gamma}}
    }
  }
\end{equation}
The whole diagram~\eqref{dia:Upsilon-bargamma} leads us to the following main result of this section:
\begin{theorem}[Reduced Dirac--Hamilton--Jacobi Equation]
  Consider a weakly degenerate Chaplygin system on a connected configuration space $Q$ and assume that the distribution $\Delta_{Q}$ is completely nonholonomic.
  Let $\bar{\gamma}: \bar{Q} \to T^{*}\bar{Q}$ be a one-form on $\bar{Q}$ that satisfies the {\em reduced Dirac--Hamilton--Jacobi equation}
  \begin{equation}
    \label{eq:ReducedNHHJ}
    \bar{H} \comp \bar{\gamma}(\bar{q}) = E,
  \end{equation}
  with a constant $E$, as well as
  \begin{equation}
    \label{eq:dbargamma}
    d\bar{\gamma} + \bar{\gamma}^{*}\Xi = 0,
  \end{equation}
  where $\Xi$ is the two-form on $T^{*}\bar{Q}$ that appeared in the definition of the almost symplectic form $\bar{\Omega}^{\rm nh}$ in Eq.~\eqref{eq:barOmega^nh} (see also Eq.~\eqref{eq:Xi-def}).
  Define $\Upsilon = \mathcal{X} \oplus \gamma : Q \to \mathcal{K}$ by (see the diagram~\eqref{dia:Upsilon-bargamma})
  \begin{equation}
    \label{eq:Upsilon-bargamma}
    \Upsilon(q) \defeq \hl^{\mathcal{K}}_{q} \comp \bar{\gamma} \comp \pi(q)
    = \hl^{\mathcal{K}}_{q}\parentheses{ \bar{\gamma}(\bar{q}) },
  \end{equation}
  where $\bar{q} \defeq \pi(q)$, i.e.,
  \begin{equation*}
    \mathcal{X}(q) \defeq \hl^{\Delta}_{q} \comp (\F\bar{L})^{-1}_{\bar{q}}(\bar{\gamma}(\bar{q})),
    \qquad
    \gamma(q) \defeq \hl^{P}_{q}(\bar{\gamma}(\bar{q})).
  \end{equation*}
  Then, $\Upsilon = \mathcal{X} \oplus \gamma$ satisfies the Dirac--Hamilton--Jacobi equation~\eqref{eq:DiracHJ-2} as well as condition~\eqref{eq:dgamma}.
\end{theorem}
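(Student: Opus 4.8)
The plan is to verify the two required conclusions—the Dirac--Hamilton--Jacobi equation~\eqref{eq:DiracHJ-2} and the closedness-type condition~\eqref{eq:dgamma}—separately, since the first is essentially algebraic while the second carries all of the geometric content. First note that $\Upsilon$ automatically satisfies the standing assumption~\eqref{assmptn:Upsilon}, because $\hl^{\mathcal{K}}_{q}$ takes values in $\mathcal{K}_{q}$ by construction; in particular $\mathcal{X}(q) = \hl^{\Delta}_{q}\comp(\F\bar{L})^{-1}_{\bar{q}}(\bar{\gamma}(\bar{q})) \in \Delta_{Q}(q)$. For~\eqref{eq:DiracHJ-2} I would simply compose: by the definition~\eqref{eq:Upsilon-bargamma} of $\Upsilon$ and Lemma~\ref{lem:mathcalE-barH},
\[
  \mathcal{E}\comp\Upsilon(q) = \mathcal{E}\comp\hl^{\mathcal{K}}_{q}(\bar{\gamma}(\bar{q})) = \bar{H}\comp\bar{\gamma}(\bar{q}) = E,
\]
where the last equality is the reduced Dirac--Hamilton--Jacobi equation~\eqref{eq:ReducedNHHJ}. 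This disposes of~\eqref{eq:DiracHJ-2} with no further work.

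The substance of the theorem is condition~\eqref{eq:dgamma}, which I would establish by computing $d\gamma$ on pairs of vectors in $\Delta_{Q}$ and matching the result against the reduced hypothesis~\eqref{eq:dbargamma}. Since $d\gamma(v,w)$ is tensorial, it suffices to fix $q$ and $v,w\in\Delta_{Q}(q)$, set $\bar{v}\defeq T_{q}\pi(v)$ and $\bar{w}\defeq T_{q}\pi(w)$, extend these to vector fields $\bar{V},\bar{W}$ on $\bar{Q}$, and take as extensions of $v,w$ the horizontal lifts $V\defeq\hl^{\Delta}(\bar{V})$ and $W\defeq\hl^{\Delta}(\bar{W})$, which agree with $v,w$ at $q$ precisely because $v,w$ are horizontal. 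The key simplification is that, by the horizontal-lift pairing~\eqref{eq:hl-pairing} together with $\gamma = \hl^{P}\comp\bar{\gamma}\comp\pi$, the function $\gamma(W)$ on $Q$ equals $\pi^{*}(\bar{\gamma}(\bar{W}))$, and likewise for $\gamma(V)$. Feeding this into Cartan's formula
\[
  d\gamma(V,W) = V(\gamma(W)) - W(\gamma(V)) - \gamma([V,W])
\]
and using that each $V$ is $\pi$-related to $\bar{V}$, the first two terms collapse to $\pi^{*}\big(\bar{V}(\bar{\gamma}(\bar{W})) - \bar{W}(\bar{\gamma}(\bar{V}))\big)$.

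It then remains to handle the bracket term, and this is where the curvature of the connection $\mathcal{A}$ (whose horizontal distribution is $\Delta_{Q}$) enters and where I expect the main difficulty. I would split $[V,W]$ into horizontal and vertical parts: its horizontal part is $\hl^{\Delta}([\bar{V},\bar{W}])$ since $V,W$ are $\pi$-related to $\bar{V},\bar{W}$, while its vertical part is $-(\mathcal{B}(V,W))_{Q}$, where $\mathcal{B}$ denotes the curvature of $\mathcal{A}$ and $\xi_{Q}$ the infinitesimal generator of $\xi\in\mathfrak{g}$. Applying $\gamma$ and using~\eqref{eq:hl-pairing} on the horizontal part gives
\[
  \gamma([V,W]) = \pi^{*}(\bar{\gamma}([\bar{V},\bar{W}])) - \ip{\hl^{P}_{q}(\bar{\gamma}(\bar{q}))}{(\mathcal{B}(V,W))_{Q}(q)},
\]
so that, combining the three contributions, $d\gamma(V,W) = \pi^{*}(d\bar{\gamma}(\bar{V},\bar{W})) + \ip{\hl^{P}(\bar{\gamma})}{(\mathcal{B}(V,W))_{Q}}$.

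The heart of the argument is then to recognize the momentum-paired curvature term as exactly $\pi^{*}(\bar{\gamma}^{*}\Xi(\bar{V},\bar{W}))$; this is precisely the curvature-times-momentum object that the definition of $\Xi$ in Eq.~\eqref{eq:Xi-def} is engineered to encode, being the same quantity appearing in the reduced almost symplectic form $\bar{\Omega}^{\rm nh} = \bar{\Omega} - \Xi$. Granting this identification we obtain
\[
  d\gamma(V,W) = \pi^{*}\big((d\bar{\gamma} + \bar{\gamma}^{*}\Xi)(\bar{V},\bar{W})\big) = 0
\]
by the reduced condition~\eqref{eq:dbargamma}, which is~\eqref{eq:dgamma}. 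The delicate points in making this rigorous are the sign and $G$-equivariance conventions that make $\mathcal{B}(V,W)$ and its pairing with $\hl^{P}(\bar{\gamma})$ well defined and descend to $\bar{Q}$, and the term-by-term bookkeeping matching them against the appendix's definition of $\Xi$. Once both~\eqref{eq:DiracHJ-2} and~\eqref{eq:dgamma} hold, $\Upsilon$ is a solution of the Dirac--Hamilton--Jacobi equation and Theorem~\ref{thm:DiracHJ} applies.
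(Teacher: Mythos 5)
Your proposal is correct and follows essentially the same route as the paper's proof: the Dirac--Hamilton--Jacobi equation~\eqref{eq:DiracHJ-2} is obtained by composing with Lemma~\ref{lem:mathcalE-barH}, and condition~\eqref{eq:dgamma} by the Cartan formula on horizontal lifts, the pairing identity~\eqref{eq:hl-pairing}, the horizontal/vertical splitting of the bracket with curvature $\mathcal{B}$ in the vertical part, and the identification of the momentum-paired curvature term with $\bar{\gamma}^{*}\Xi$ via Eq.~\eqref{eq:Xi-def}. The one step you flag as "granting" is exactly the definition of the momentum map, $\ip{\hl^{P}_{q}(\bar{\gamma}(\bar{q}))}{\xi_{Q}(q)} = \ip{{\bf J}(\hl^{P}_{q}(\bar{\gamma}(\bar{q})))}{\xi}$, combined with Eq.~\eqref{eq:Xi-def}, which is how the paper closes the argument.
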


\begin{proof}
  This proof is very similar to that of Theorem~4.1 in \citet{OhFeBlZe2011}.

  The diagram \eqref{dia:Upsilon-bargamma} shows that if the one-form $\bar{\gamma}$ satisfies Eq.~\eqref{eq:ReducedNHHJ} then the map $\Upsilon$ defined by Eq.~\eqref{eq:Upsilon-bargamma} satisfies the Dirac--Hamilton--Jacobi equation~\eqref{eq:DiracHJ-2}.

  To show that it also satisfies the condition~\eqref{eq:dgamma}, we perform the following calculations:
  Let $Y^{\rm h}, Z^{\rm h} \in \mathfrak{X}(Q)$ be arbitrary horizontal vector fields, i.e., $Y^{\rm h}_{q}, Z^{\rm h}_{q} \in \Delta_{Q}(q)$ for any $q \in Q$.
  We start from the following identity:
  \begin{equation}
    \label{eq:gamma-identity}
    d\gamma(Y^{\rm h}, Z^{\rm h})
    = Y^{\rm h}[\gamma(Z^{\rm h})] - Z^{\rm h}[\gamma(Y^{\rm h})] - \gamma([Y^{\rm h},Z^{\rm h}]).
  \end{equation}
  The goal is to show that the right-hand side vanishes.
  Let us first evaluate the first two terms on the right-hand side of the above identity at an arbitrary point $q \in Q$:
  Let $Z_{\bar{q}} \defeq T_{q}\pi_{Q}(Z^{\rm h}_{q}) \in T_{\bar{q}}\bar{Q}$, then $Z^{\rm h}_{q} = \hl^{\Delta}_{q}(Z_{\bar{q}})$.
  Thus, we have
  \begin{align*}
    \gamma(Z^{\rm h})(q)
    &= \ip{ \hl^{P}_{q} \comp \bar{\gamma}(\bar{q}) }{ \hl^{\Delta}_{q}(Z_{\bar{q}}) }
    \\
    &= \ip{ \bar{\gamma}(\bar{q}) }{ Z_{\bar{q}} }
    \\
    &= \bar{\gamma}(Z)(\bar{q}).
  \end{align*}
  Hence, writing $\gamma_{Z} = \bar{\gamma}(Z)$ for short, we have $\gamma(Z^{\rm h}) = \gamma_{Z} \comp \pi$.
  Therefore, defining $Y_{\bar{q}} \defeq T_{q}\pi(Y^{\rm h}_{q})$, i.e., $Y^{\rm h}_{q} = \hl^{\Delta}_{q}(Y_{\bar{q}})$,
  \begin{align*}
    Y^{\rm h}[ \gamma(Z^{\rm h}) ](q)
    &= Y^{\rm h}[ \gamma_{Z} \comp \pi ](q)
    \\
    &= \ip{ d(\gamma_{Z} \comp \pi)_{q} }{ Y^{\rm h}_{q} }
    \\
    &= \ip{ (\pi^{*} d\gamma_{Z})_{q} }{ Y^{\rm h}_{q} }
    \\
    &= \ip{ d\gamma_{Z}(\bar{q}) }{ T_{q}\pi(Y^{\rm h}_{q}) }
    \\
    &= \ip{ d\gamma_{Z}(\bar{q}) }{ Y_{\bar{q}} }
    \\
    &= Y[\gamma_{Z}](\bar{q})
    \\
    &= Y[\bar{\gamma}(Z)](\bar{q}).
  \end{align*}
  Hence, we have
  \begin{align}
     Y^{\rm h}[ \gamma(Z^{\rm h}) ] -  Z^{\rm h}[ \gamma(Y^{\rm h}) ]
     = Y[\bar{\gamma}(Z)] - Z[\bar{\gamma}(Y)],
    \label{eq:YgammaZ-ZgammaY}
  \end{align}
  where we have omitted $q$ and $\bar{q}$ for simplicity.

  Now, let us evaluate the last term on the right-hand side of Eq.~\eqref{eq:gamma-identity}:
  First we would like to decompose $[Y^{\rm h},Z^{\rm h}]_{q}$ into the horizontal and vertical parts.
  Since both $Y^{\rm h}$ and $Z^{\rm h}$ are horizontal, we have\footnote{See, e.g., \citet{KoNo1963}[Proposition~1.3~(3) on p.~65].}
  \begin{equation*}
    \hor([Y^{\rm h},Z^{\rm h}]_{q}) = \hl^{\Delta}_{q}([Y, Z]_{\bar{q}}),
  \end{equation*}
  whereas the vertical part is
  \begin{equation*}
    \ver([Y^{\rm h},Z^{\rm h}]_{q}) = \parentheses{ \mathcal{A}_{q}([Y^{\rm h},Z^{\rm h}]_{q}) }_{Q}(q) = -\parentheses{ \mathcal{B}_{q}(Y^{\rm h}_{q},Z^{\rm h}_{q}) }_{Q}(q),
  \end{equation*}
  where we used the following relation between the connection $\mathcal{A}$ and its curvature $\mathcal{B}$ that holds for horizontal vector fields $Y^{\rm h}$ and $Z^{\rm h}$:
  \begin{align*}
    \mathcal{B}_{q}(Y^{\rm h}_{q}, Z^{\rm h}_{q}) &= d\mathcal{A}_{q}(Y^{\rm h}_{q}, Z^{\rm h}_{q})
    \\
    &= Y^{\rm h}[ \mathcal{A}(Z^{\rm h}) ](q) - Z^{\rm h}[ \mathcal{A}
    (Y^{\rm h}) ](q) - \mathcal{A}([Y^{\rm h},Z^{\rm h}])(q)
    \\
    &= -\mathcal{A}([Y^{\rm h},Z^{\rm h}])(q).
  \end{align*}
  As a result, we have the decomposition
  \begin{equation*}
    [Y^{\rm h},Z^{\rm h}]_{q} = \hl^{\Delta}_{q}([Y, Z]_{\bar{q}}) - \parentheses{ \mathcal{B}_{q}(Y^{\rm h}_{q},Z^{\rm h}_{q}) }_{Q}(q).
  \end{equation*}
  Therefore,
  \begin{align}
    \gamma([Y^{\rm h},Z^{\rm h}])(q)
    &= \ip{ \hl^{P}_{q} \comp \bar{\gamma} \comp \pi(q) }{ \hl^{\Delta}_{q}([Y, Z]_{\bar{q}}) }
    - \ip{ \hl^{P}_{q} \comp \bar{\gamma} \comp \pi(q) }{ \parentheses{ \mathcal{B}_{q}(Y^{\rm h}_{q},Z^{\rm h}_{q}) }_{Q}(q) }
    \nonumber\\
    &= \ip{ \bar{\gamma}(\bar{q}) }{ [Y, Z]_{\bar{q}} }
    - \ip{ {\bf J}\parentheses{ \hl^{P}_{q} \comp \bar{\gamma}(\bar{q}) } }{ \mathcal{B}_{q}(Y^{\rm h}_{q},Z^{\rm h}_{q}) }
    \nonumber\\
    &= \bar{\gamma}([Y,Z])(\bar{q})
    - \bar{\gamma}^{*}\Xi(Y, Z)(\bar{q}),
    \label{eq:gammaYZ}
  \end{align}
  where the second equality follows from Eq.~\eqref{eq:hl-pairing} and the definition of the momentum map ${\bf J}$; the last equality follows from the definition of $\Xi$ in Eq.~\eqref{eq:Xi-def}:
  Let $\pi_{\bar{Q}}: T^{*}\bar{Q} \to \bar{Q}$ be the cotangent bundle projection; then we have
  \begin{align*}
    \bar{\gamma}^{*}\Xi(Y, Z)(\bar{q})
    &= \Xi_{\bar{\gamma}(\bar{q})} \parentheses{ T\bar{\gamma}(Y_{\bar{q}}), T\bar{\gamma}(Z_{\bar{q}}) }
    \\
    &= \ip{ {\bf J} \comp \hl^{P}_{q} \parentheses{ \bar{\gamma}(\bar{q})} }{ \mathcal{B}_{q}\parentheses{\hl^{\Delta}_{q}(Y_{\bar{q}}), \hl^{\Delta}_{q}(Z_{\bar{q}})} },
  \end{align*}
  since $\pi_{\bar{Q}} \comp \bar{\gamma} = \id_{\bar{Q}}$ and thus $T\pi_{\bar{Q}} \comp T\bar{\gamma} = \id_{T\bar{Q}}$.
  Substituting Eqs.~\eqref{eq:YgammaZ-ZgammaY} and \eqref{eq:gammaYZ} into Eq.~\eqref{eq:gamma-identity}, we obtain
  \begin{align*}
    d\gamma(Y^{\rm h}, Z^{\rm h})
    &= Y[\bar{\gamma}(Z)] - Z[\bar{\gamma}(Y)] - \bar{\gamma}([Y,Z])(\bar{q}) + \bar{\gamma}^{*}\Xi(Y, Z)
    \\
    &= d\bar{\gamma}(Y, Z) + \bar{\gamma}^{*}\Xi(Y, Z)
    \\
    &= \parentheses{ d\bar{\gamma} + \bar{\gamma}^{*}\Xi }(Y, Z)
    = 0. \qedhere
  \end{align*}
\end{proof}

\begin{example}[Simplified Roller Racer; see Examples~\ref{ex:RollerRacer} and \ref{ex:RollerRacer2}]
  The Lie algebra $\mathfrak{g}$ of $G = \R^{2}$ is identified with $\R^{2}$; let be $(\xi, \eta)$ the coordinates for $\mathfrak{g}$ such that $\xi_{Q} = \tpd{}{x}$ and $\eta_{Q} = \tpd{}{y}$.
  Then, we may write the connection $\mathcal{A}: TQ \to \mathfrak{g}$ as
  \begin{equation*}
    \mathcal{A} = \omega^{1} \otimes \pd{}{\xi} + \omega^{2} \otimes \pd{}{\eta},
  \end{equation*}
  where $\omega^{1}$ and $\omega^{2}$ are the constraint one-forms defined in Eq.~\eqref{eq:omegas-RollerRacer}; hence its curvature is given by
  \begin{equation*}
    \mathcal{B} = -\csc^{2}\phi [d_{1} \cos\theta + d_{2} \cos(\theta + \phi)] d\theta \wedge d\phi \otimes d\xi
     - \csc^{2}\phi [d_{1} \sin\theta + d_{2}\sin(\theta+\phi)] d\theta \wedge d\phi \otimes d\eta.
  \end{equation*}
  Furthermore, the momentum map ${\bf J}: T^{*}Q \to \mathfrak{g}^{*}$ is given by
  \begin{equation*}
    {\bf J}(p_{q}) = p_{x}\,d\xi + p_{y}\,d\eta.
  \end{equation*}
  Therefore, we have
  \begin{equation*}
    \Xi = -p_{\phi} \parentheses{ \frac{d_{1}}{d_{2}} + \cos\phi } \csc\phi\, d\theta \wedge d\phi.
  \end{equation*}

  Since the reduced Lagrangian $\bar{L}$ (see Eq.~\eqref{eq:barL-RollerRacer}) is non-degenerate, we have the reduced Hamiltonian $\bar{H}: T^{*}\bar{Q} \to \R$ given by
  \begin{equation*}
    \bar{H} = \frac{1}{2I_{1}}\brackets{ p_{\theta} - \parentheses{ 1 + \frac{d_{1}}{d_{2}} \cos\phi }p_{\phi}\, }^{2}
    + \frac{\sin^{2}\phi}{2 m_{1}d_{2}^{2}}\,p_{\phi}^{2}.
  \end{equation*}

  We assume the ansatz
  \begin{equation*}
    \bar{\gamma}_{\phi}(\theta, \phi) = \bar{\gamma}_{\phi}(\phi).
  \end{equation*}
  Then, the reduced Dirac--Hamilton--Jacobi equation \eqref{eq:ReducedNHHJ} gives
  \begin{equation*}
    \frac{1}{2 I_{1}}\brackets{
      \bar{\gamma}_{\theta}(\theta,\phi) - \parentheses{ 1 + \frac{d_{1}}{d_{2}} \cos\phi }\bar{\gamma}_{\phi}(\phi)
    }^{2}
    + \frac{\sin^{2}\phi}{2 m_{1}d_{2}^{2}}\, \bar{\gamma}_{\phi}(\phi)^{2} = E,
  \end{equation*}
  which implies that $\bar{\gamma}_{\theta}(\theta,\phi) = \bar{\gamma}_{\theta}(\phi)$.
  Solving this for $\bar{\gamma}_{\theta}(\phi)$ gives
  \begin{equation*}
    \bar{\gamma}_{\theta}(\phi) =
    \parentheses{ 1 + \frac{d_{1}}{d_{2}}\,\cos\phi } \bar{\gamma}_{\phi}(\phi)
    \pm \sqrt{ I_{1} \parentheses{ 2E - \frac{\sin^{2}\phi}{m_{1}d_{2}^{2}}\,\bar{\gamma}_{\phi}(\phi)^{2} } }.
  \end{equation*}
  Substituting the first case into Eq.~\eqref{eq:dbargamma}, we obtain
  \begin{equation*}
    \bar{\gamma}_{\phi}'(\phi) = -\cot\phi\,\bar{\gamma}_{\phi}(\phi),
  \end{equation*}
  which gives
  \begin{equation*}
    \bar{\gamma}_{\phi}(\phi) = C \csc\phi
  \end{equation*}
  for some constant $C$.
  Therefore,
  \begin{equation*}
    \bar{\gamma}_{\theta}(\phi) = C \parentheses{ 1 + \frac{d_{1}}{d_{2}}\,\cos\phi } \csc\phi
    + \sqrt{ I_{1} \parentheses{ 2E - \frac{C^{2}}{m_{1}d_{2}^{2}} } }.
  \end{equation*}
  It is straightforward to check that, with the choice
  \begin{equation*}
    C = d_{2} \sqrt{ m_{1}(2E - I_{1} v_{\theta}^{2}) },
  \end{equation*}
  Eq.~\eqref{eq:Upsilon-bargamma} gives the solution obtained in Example~\ref{ex:RollerRacer2}.
\end{example}

\begin{remark}
  Notice that the ansatz we used here is less elaborate compared to the one, Eq.~\eqref{eq:ansatz1-RollerRacer}, used for the Dirac--Hamilton--Jacobi equation without the reduction.
  Specifically, accounting for the $\R^{2}$-symmetry is not necessary here, since the reduced Dirac--Hamilton--Jacobi equation is defined for the $\R^{2}$-reduced system.
\end{remark}

\section{Conclusion and Future Work}
\subsection*{Conclusion}
We developed Hamilton--Jacobi theory for degenerate Lagrangian
systems with holonomic and nonholonomic constraints.
In particular, we illustrated, through a few examples, that solutions
of the Dirac--Hamilton--Jacobi equation can be used to obtain
exact solutions of the equations of motion. Also, motivated by
those degenerate Lagrangian systems that appear as simplified
models of nonholonomic mechanical systems, we introduced a class of
degenerate nonholonomic Lagrangian systems that reduce to
non-degenerate almost Hamiltonian systems. We then showed that the
Dirac--Hamilton--Jacobi equation reduces to the nonholonomic
Hamilton--Jacobi equation for the reduced non-degenerate system.
\subsection*{Future Work}
\begin{itemize}
\item {\em Relationship with discrete variational Dirac mechanics}.
  Hamilton--Jacobi theory has been an important ingredient in discrete mechanics and symplectic integrators from both the theoretical and implementation points of view~(see \citet{MaWe2001}[Sections~1.7, 1.8, 4.7, and 4.8] and \citet{ChSc1990}).
  It is interesting to see if the Dirac--Hamilton--Jacobi equation plays the same role in discrete variational Dirac mechanics of \citet{LeOh2010, LeOh2011}.
  \smallskip
\item {\em Hamilton--Jacobi theory for systems with Lagrangians linear in velocity}.
  As briefly mentioned in Section~\ref{ssec:Limitation}, the Dirac--Hamilton--Jacobi equation is not appropriate for those systems with Lagrangians that are linear in velocity.
  However, \citet{RoSc2003}(Example~4) illustrate that their formulation of the Hamilton--Jacobi equation can be applied to such systems.
  We are interested in a possible generalization of our formulation to deal with such systems, and also a link with their formulation.
  \smallskip
\item {\em Asymptotic analysis of massless approximation}.
  Massless approximations for some nonholonomic systems seem to give good approximations to the full formulation.
  It seems that the nonholonomic constraints ``regularize'' the otherwise singular perturbation problem, and hence makes the massless approximations viable.
  We expect that asymptotic analysis will reveal how the perturbation problem becomes regular, particularly for those cases where massless approximations lead to weakly degenerate Chaplygin systems.
  \smallskip
\item {\em Hamilton--Jacobi theory for general systems on the Pontryagin bundle}.
  Section~\ref{ssec:LDSonPontryaginBundle} naturally leads us to consider systems on the Pontryagin bundle described by an arbitrary Dirac structure.
  We are interested in this generalization, its corresponding Hamilton--Jacobi theory, and its applications.
\end{itemize}

\begin{acknowledgments}
We would like to thank Anthony Bloch, Henry Jacobs, Jerrold Marsden, Joris Vankerschaver, and Hiroaki Yoshimura for their helpful comments, and also Anthony Bloch and Wang Sang Koon for their permission to use their figures.
This material is based upon work supported by the National Science Foundation under the applied mathematics grant DMS-0726263, the Faculty Early Career Development (CAREER) award DMS-1010687, the FRG grant DMS-1065972, MICINN (Spain) grants MTM2009-13383 and MTM2009-08166-E, and the projects of the Canary government SOLSUBC200801000238 and ProID20100210.
\end{acknowledgments}

\appendix

\section{Reduction of Weakly Degenerate Chaplygin Systems}
\label{sec:RedWDCS}
\subsection{Constrained Dirac Structure}
We may restrict the Dirac structure $D_{\Delta_{Q}}$ to $P \subset
T^{*}Q$ as follows~(see \citet{YoMa2006b}[Section~5.6] and
references therein): Let us define a distribution $\mathcal{H}
\subset TP$ on $P$ by
\begin{equation}
  \label{eq:mathcalH}
  \mathcal{H} \defeq TP \cap \Delta_{T^{*}Q},
\end{equation}
and also, using the inclusion $\iota_{P}: P \hookrightarrow
T^{*}Q$, define the two-form $\Omega_{P} \defeq \iota_{P}^{*}
\Omega$ on $P$. Then, define the {\em constrained Dirac structure}
$D_{P} \subset TP \oplus T^{*}P$, for each $z \in P$,  by
\begin{equation*}
  D_{P}(z) \defeq \setdef{
    (v_{z},\alpha_{z}) \in T_{z}P \oplus T^{*}_{z}P
  }{
    v_{z} \in \mathcal{H}_{z},\;
    \alpha_{z} - \Omega_{P}^{\flat}(z)(v_{z}) \in \mathcal{H}^\circ_z
  },
\end{equation*}
where $\Omega_{P}^{\flat}: TP \to T^{*}P$ is the flat map induced by $\Omega_{P}$. Then, we have the {\em constrained
Lagrange--Dirac system} defined by
\begin{equation}
  \label{eq:ConstrainedLDS}
  (X_{P}, \mathfrak{D}L_{\rm c}) \in D_{P},
\end{equation}
where $X_{P}$ is a vector field on $P$, $L_{\rm c} \defeq
L|_{\Delta_{Q}}$ the constrained Lagrangian, and
$\mathfrak{D}L_{\rm c}(u) \defeq \mathfrak{D}L(u)|_{TP}$ for any
$u \in \Delta_{Q}$.

If the constrained Lagrangian $L_{\rm c}$ is non-degenerate, i.e.,
the partial Legendre transformation $\Leg|_{\Delta_{Q}}: \Delta_{Q}
\to P$ is invertible, then we may define the constrained
Hamiltonian~\cite{YoMa2007b} $H_{P}: P \to \R$ by
\begin{equation*}
  H_{P}(p_{q}) \defeq \ip{ p_{q} }{ v_{q} } - L_{\rm c}(v_{q}),
\end{equation*}
where $v_{q} \defeq (\Leg|_{\Delta_{Q}})^{-1}(p_{q})$. Then, the
constrained Lagrange--Dirac system~\eqref{eq:ConstrainedLDS},
is equivalent to the {\em constrained implicit Hamiltonian system} defined by
\begin{equation}
  \label{eq:ConstrainedIHS}
  (X_{P}, dH_{P}) \in D_{P}.
\end{equation}
\begin{remark}
  Let
  \begin{equation}
    \label{eq:Omega_mathcalH}
    \Omega_{\mathcal{H}} \defeq \Omega_{P}|_{\mathcal{H}}
  \end{equation}
  be the restriction of $\Omega_{P}$ to $\mathcal{H} \subset TP$ and hence a skew-symmetric bilinear form in $\mathcal{H}$.
  If $\Omega_{\mathcal{H}}$ is non-degenerate, then Eq.~\eqref{eq:ConstrainedIHS} gives
  \begin{equation*}
    i_{X_{P}} \Omega_{\mathcal{H}} = dH_{P}|_{\mathcal{H}},
  \end{equation*}
  which is nonholonomic Hamilton's equations of \citet{BaSn1993} (see also \citet{KoMa1997c}).
\end{remark}

\subsection{Reduction of Constrained Dirac Structure}
Let us now show how to reduce the constrained Dirac structure $D_{P}$ to a Dirac structure on $T^{*}\bar{Q}$, where $\bar{Q} \defeq Q/G$.
This special case of Dirac reduction to follow gives a Dirac point of view on the nonholonomic reduction of \citet{Ko1992}, and hence provides a natural framework for the reduction of weakly degenerate Chaplygin systems.
See \citet{YoMa2009} for reduction of Dirac structures without constraints, \citet{JoRa2011} for the relationship between Dirac and nonholonomic reduction of \citet{BaSn1993}; see also \citet{CaCaCrIb1986, CaLeMaDi1999} for a theory of reducing degenerate Lagrangian systems to non-degenerate ones.

Let $\Phi: G \times Q \to Q$ be the action of the Lie group $G$
given in Definition~\ref{def:ChaplyginSystems} and
$T^{*}\Phi_{g^{-1}}: T^{*}Q \to T^{*}Q$ be its cotangent lift
defined by
\begin{equation*}
  \ip{T^{*}\Phi_{g^{-1}}(\alpha)}{v}
  = \ip{\alpha}{T\Phi_{g^{-1}}(v)}.
\end{equation*}
It is easy to show that the $G$-symmetries of the Lagrangian $L$
and the distribution $\Delta_{Q}$ imply that the submanifold $P
\subset T^{*}Q$ is invariant under the action of the cotangent
lift.
Hence, we may restrict the action to $P$ and define $\Phi^{P}: G \times P \to P$, i.e., $\Phi^{P}_{g}: P \to P$ by $\Phi^{P}_{g} \defeq T^{*}\Phi_{g^{-1}}|_{P}$ for any $g \in G$.
This gives rise to the principal bundle
\begin{equation*}
  \pi^{P}_{G}: P \to P/G.
\end{equation*}
The geometric structure of weakly degenerate Chaplygin systems
summarized in Section~\ref{ssec:GeometryOfWDCS} gives rise to a
diffeomorphism $\varphi: T^{*}\bar{Q} \to P/G$; this then induces
the map $\rho: P \to T^{*}\bar{Q}$ so that the diagram below
commutes (see \citet{HoGa2009}).
\begin{equation}
  \label{dia:rho}
  \begin{array}{c}
    \xymatrix@!0@R=0.8in@C=0.9in{
      P \ar[d]_{\pi^{P}_{G}} \ar[dr]^{\rho}&
      \\
      P/G \ar[r]_{\varphi^{-1}} & T^{*}\bar{Q}
    }
  \end{array}
\end{equation}

Furthermore, the principal connection $\mathcal{A}: TQ \to
\mathfrak{g}$ defined in Eq.~\eqref{eq:mathcalA} induces the
principal connection $\mathcal{A}_{P}: TP \to \mathfrak{g}$
defined by
\begin{equation*}
  \mathcal{A}_{P} \defeq (\pi_{Q} \comp \iota_{P})^{*} \mathcal{A},
\end{equation*}
and the horizontal space for this principal connection is
$\mathcal{H} \subset TP$ defined in Eq.~\eqref{eq:mathcalH}, i.e.,
$\mathcal{H} = \ker\mathcal{A}_{P}$~\cite{HoGa2009}.
Therefore, writing $[z] \defeq \pi^{P}_{G}(z) \in P/G$, we have the horizontal lift
\begin{equation*}
  \hl^{\mathcal{H}}_{z}: T_{[z]}(P/G) \to \mathcal{H}_{z};
  \quad
  v_{[z]} \mapsto (T_{z}\pi^{P}_{G}|_{\mathcal{H}_{z}})^{-1}(v_{[z]}).
\end{equation*}
Then, clearly the following diagram commutes:
\begin{equation}
  \label{dia:Trho}
  \begin{array}{c}
    \xymatrix@!0@R=0.9in@C=1.15in{
      T_{z}P \ar[dr]^{T_{z}\rho}&
      \\
      T_{[z]}(P/G) \ar[r]_{\ T_{[z]}\varphi^{-1}} \ar[u]^{\hl^{\mathcal{H}}_{z}} & T_{\bar{z}}T^{*}\bar{Q}
    }
  \end{array}
\end{equation}
where $\bar{z} \defeq \varphi^{-1}([z]) \in T^{*}\bar{Q}$.

\begin{lemma}
  The two-form $\Omega_{P}$ is invariant under the $G$-action, i.e., for any $g \in G$,
  \begin{equation}
    \label{eq:Omega_P-preservation}
    (\Phi^{P}_{g})^{*} \Omega_{P} = \Omega_{P}.
  \end{equation}
\end{lemma}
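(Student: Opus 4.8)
The plan is to reduce the statement to the classical fact that the cotangent lift of a diffeomorphism is a symplectomorphism of the canonical symplectic manifold, and then to invoke the $G$-invariance of $P$ that was already established above. Recall that for any diffeomorphism $\Phi_{g}: Q \to Q$, its cotangent lift $T^{*}\Phi_{g^{-1}}: T^{*}Q \to T^{*}Q$ preserves the tautological one-form $\Theta$ on $T^{*}Q$, and hence the canonical symplectic form $\Omega$ obtained from it by exterior differentiation; that is, $(T^{*}\Phi_{g^{-1}})^{*}\Omega = \Omega$ (see, e.g., \citet{AbMa1978}). I would state this as the single external ingredient.

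Next I would record the intertwining relation that makes the restriction well-behaved. Since the text has already shown that $P \subset T^{*}Q$ is invariant under the cotangent-lifted action, and since $\Phi^{P}_{g}$ is by definition the restriction $T^{*}\Phi_{g^{-1}}|_{P}$, the inclusion $\iota_{P}: P \hookrightarrow T^{*}Q$ satisfies
\begin{equation*}
  \iota_{P} \comp \Phi^{P}_{g} = T^{*}\Phi_{g^{-1}} \comp \iota_{P}.
\end{equation*}
The proof of \eqref{eq:Omega_P-preservation} then follows from a chain of pullback identities using functoriality of the pullback:
\begin{equation*}
  (\Phi^{P}_{g})^{*}\Omega_{P}
  = (\Phi^{P}_{g})^{*}\iota_{P}^{*}\Omega
  = (\iota_{P} \comp \Phi^{P}_{g})^{*}\Omega
  = (T^{*}\Phi_{g^{-1}} \comp \iota_{P})^{*}\Omega
  = \iota_{P}^{*}(T^{*}\Phi_{g^{-1}})^{*}\Omega
  = \iota_{P}^{*}\Omega
  = \Omega_{P},
\end{equation*}
where the fourth equality uses the intertwining relation and the fifth uses the symplectomorphism property.

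There is essentially no substantial obstacle here: the result is a formal consequence of functoriality of pullbacks together with two inputs, one standard (cotangent lifts preserve $\Omega$) and one already proved in the text (invariance of $P$ under $T^{*}\Phi_{g^{-1}}$). The only point requiring minor care is verifying the intertwining relation, i.e., that restricting the ambient symplectomorphism to the invariant submanifold $P$ commutes with the inclusion; this is immediate from the definition $\Phi^{P}_{g} \defeq T^{*}\Phi_{g^{-1}}|_{P}$ together with the $G$-invariance of $P$, so no further work is needed.
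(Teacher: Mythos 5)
Your proposal is correct and follows exactly the same route as the paper's own proof: the intertwining relation $\iota_{P} \comp \Phi^{P}_{g} = T^{*}\Phi_{g^{-1}} \comp \iota_{P}$ combined with functoriality of pullbacks and the fact that the cotangent lift is symplectic. No differences worth noting.
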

\begin{proof}
  Using the relation $T^{*}\Phi_{g^{-1}} \comp \iota_{P} = \iota_{P} \comp \Phi^{P}_{g}$, we have
  \begin{align*}
    (\Phi^{P}_{g})^{*} \Omega_{P} &= (\Phi^{P}_{g})^{*} \iota_{P}^{*} \Omega
    \\
    &= (\iota_{P} \comp \Phi^{P}_{g})^{*} \Omega
    \\
    &= (T^{*}\Phi_{g^{-1}} \comp \iota_{P})^{*} \Omega
    \\
    &= (\iota_{P})^{*} \comp (T^{*}\Phi_{g^{-1}})^{*} \Omega
    \\
    &= \iota_{P}^{*} \Omega
    \\
    &= \Omega_{P},
  \end{align*}
  where we used the fact that the cotangent lift $T^{*}\Phi_{g^{-1}}$ is symplectic.
\end{proof}
Now, consider the action of $G$ on the Whitney sum $TP \oplus
T^{*}P$ defined by
\begin{equation*}
  \Psi: G \times (TP \oplus T^{*}P) \to TP \oplus T^{*}P;
  \quad
  (g, (v_{z}, \alpha_{z})) \mapsto \parentheses{ T_{z}\Phi^{P}_{g}(v_{z}), T_{g z}^{*}\Phi^{P}_{g^{-1}}(\alpha_{z}) } \eqdef \parentheses{ g \cdot v_{z},  g \cdot \alpha_{z} }.
\end{equation*}
Then, we have the following:
\begin{proposition}
  The constrained Dirac structure $D_{P}$ is invariant under the action $\Psi$ defined above.
\end{proposition}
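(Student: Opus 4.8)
The plan is to check that each of the two defining conditions of $D_{P}$ is preserved by $\Psi$. Writing $gz \defeq \Phi^{P}_{g}(z)$, I want to show that $(v_{z}, \alpha_{z}) \in D_{P}(z)$ implies $(g \cdot v_{z}, g \cdot \alpha_{z}) \in D_{P}(gz)$; that is, $g \cdot v_{z} \in \mathcal{H}_{gz}$ and $(g \cdot \alpha_{z}) - \Omega_{P}^{\flat}(gz)(g \cdot v_{z}) \in \mathcal{H}^{\circ}_{gz}$.

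First I would record the $G$-invariance of the horizontal distribution. Since $\mathcal{H} = \ker \mathcal{A}_{P}$ is the horizontal space of the principal connection $\mathcal{A}_{P}$, it satisfies $T_{z}\Phi^{P}_{g}(\mathcal{H}_{z}) = \mathcal{H}_{gz}$ (equivalently one may use $\mathcal{H} = TP \cap \Delta_{T^{*}Q}$ together with the invariance of $P$ and of $\Delta_{T^{*}Q}$). This at once gives $g \cdot v_{z} \in \mathcal{H}_{gz}$, and, by dualizing through the pairing identity $\ip{g \cdot \beta}{g \cdot u} = \ip{\beta}{u}$, also the invariance of the annihilator, $g \cdot \mathcal{H}^{\circ}_{z} = \mathcal{H}^{\circ}_{gz}$.

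The key step is to promote the invariance $(\Phi^{P}_{g})^{*}\Omega_{P} = \Omega_{P}$ from the preceding lemma to the equivariance of the flat map. Pairing $\Omega_{P}^{\flat}(gz)(g \cdot v_{z})$ against an arbitrary $w \in T_{gz}P$, written as $w = g \cdot u$ with $u \in T_{z}P$, and invoking the lemma, I expect to obtain
\[
  \Omega_{P}^{\flat}(gz)(g \cdot v_{z}) = g \cdot \parentheses{ \Omega_{P}^{\flat}(z)(v_{z}) }.
\]
Granting this, the residual covector transforms equivariantly,
\[
  (g \cdot \alpha_{z}) - \Omega_{P}^{\flat}(gz)(g \cdot v_{z}) = g \cdot \parentheses{ \alpha_{z} - \Omega_{P}^{\flat}(z)(v_{z}) },
\]
and since $\alpha_{z} - \Omega_{P}^{\flat}(z)(v_{z}) \in \mathcal{H}^{\circ}_{z}$, the invariance $g \cdot \mathcal{H}^{\circ}_{z} = \mathcal{H}^{\circ}_{gz}$ places the right-hand side in $\mathcal{H}^{\circ}_{gz}$. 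This verifies the second condition at $gz$ and finishes the proof.

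The main obstacle is not conceptual but notational: the action on vectors uses $\Phi^{P}_{g}$ while the action on covectors uses $\Phi^{P}_{g^{-1}}$, so establishing the equivariance of $\Omega_{P}^{\flat}$ requires carefully tracking base points and applying the pairing identity at the correct fibers (using $T_{gz}\Phi^{P}_{g^{-1}} \comp T_{z}\Phi^{P}_{g} = \id$). Once these conventions are pinned down, every remaining step reduces to the invariance of $\Omega_{P}$ and of $\mathcal{H}$, both already established.
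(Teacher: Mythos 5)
Your proposal is correct and follows essentially the same route as the paper's proof: both rest on the $G$-invariance of $\mathcal{H} = \ker\mathcal{A}_{P}$, the invariance $(\Phi^{P}_{g})^{*}\Omega_{P} = \Omega_{P}$ from the preceding lemma, and the pairing identity for the cotangent-lifted action. The paper merely packages the computation as a single direct pairing of the residual covector against an arbitrary $w_{gz} \in \mathcal{H}_{gz}$, whereas you factor it into equivariance of $\Omega_{P}^{\flat}$ plus invariance of the annihilator; the two are equivalent.
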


\begin{proof}
  Let $z \in P$ be arbitrary and $(v_{z}, \alpha_{z}) \in D_{P}(z)$.
  Then, $v_{z} \in \mathcal{H}_{z}$ and $\alpha_{z} - \Omega_{P}^{\flat}(v_{z}) \in \mathcal{H}_{z}^{\circ}$.
  Now, the $G$-invariance of $\mathcal{H} = \ker\mathcal{A}_{P}$ implies $T\Phi_{g}(v_{z}) \in \mathcal{H}_{g z}$.
  Also, for any $w_{g z} \in \mathcal{H}_{g z}$ we have $w_{z} \defeq T_{gz}\Phi^{P}_{g^{-1}}(w_{z}) \in \mathcal{H}_{z}$, and thus
  \begin{align*}
    \ip{ T_{g z}^{*}\Phi^{P}_{g^{-1}}(\alpha_{z}) - \Omega_{P}^{\flat}\parentheses{ T_{z}\Phi^{P}_{g}(v_{z}) } }{ w_{g z} }
    &= \ip{ \alpha_{z} }{ T_{g z}\Phi^{P}_{g^{-1}}(w_{g z}) }
    - \Omega_{P}\parentheses{ T_{z}\Phi^{P}_{g}(v_{z}), T_{z}\Phi^{P}_{g}(w_{z}) }
    \\
    &= \ip{ \alpha_{z} }{ w_{z} }
    - \Omega_{P}\parentheses{ T_{z}\Phi^{P}_{g}(v_{z}), T_{z}\Phi^{P}_{g}(w_{z}) }
    \\
    &= \ip{ \alpha_{z} }{ w_{z} }
    - (\Phi^{P}_{g})^{*} \Omega_{P}\parentheses{ v_{z}, w_{z} }
    \\
    &= \ip{ \alpha_{z} }{ w_{z} }
    - \Omega_{P}\parentheses{ v_{z}, w_{z} }
    \\
    &= \ip{ \alpha_{z} - \Omega_{P}^{\flat}(v_{z}) }{ w_{z} }
    \\
    &= 0,
  \end{align*}
  where the fourth line follows from Eq.~\eqref{eq:Omega_P-preservation}.
  Hence
  \begin{equation*}
    (g \cdot v_{z}, g \cdot \alpha_{z}) = \parentheses{ T_{z}\Phi^{P}_{g}(v_{z}), T_{g z}^{*}\Phi^{P}_{g^{-1}}(\alpha_{z}) } \in D_{P}(g z),
  \end{equation*}
  and thus the claim follows.
\end{proof}

Now, the main result in this section is the following:
\begin{theorem}
  \label{thm:barD}
  The reduced constrained Dirac structure $[D_{P}]_{G} \defeq D_{P}/G$ is identified with the Dirac structure $\bar{D}$ on $T^{*}\bar{Q}$ defined, for any $\bar{z} \in T^{*}\bar{Q}$, by
  \begin{equation}
    \label{eq:barD}
    \bar{D}(\bar{z}) \defeq \setdef{
      (v_{\bar{z}},\alpha_{\bar{z}}) \in T_{\bar{z}}T^{*}\bar{Q} \oplus T^{*}_{\bar{z}}T^{*}\bar{Q}
    }{
      \alpha_{\bar{z}} = (\bar{\Omega}^{\rm nh})^{\flat}(v_{\bar{z}})
    },
  \end{equation}
  where $\bar{\Omega}^{\rm nh} = \bar{\Omega} - \Xi$ with $\bar{\Omega}$ being the standard symplectic form on $T^{*}\bar{Q}$, and the two-form $\Xi$ on $T^{*}\bar{Q}$ is defined as follows:
  For any $\alpha_{\bar{q}} \in T_{\bar{q}}^{*}\bar{Q}$ and $\mathcal{Y}_{\alpha_{\bar{q}}}, \mathcal{Z}_{\alpha_{\bar{q}}} \in T_{\alpha_{\bar{q}}}T^{*}\bar{Q}$, let $Y_{\bar{q}} \defeq T\pi_{\bar{Q}}(\mathcal{Y}_{\alpha_{\bar{q}}})$ and $Z_{\bar{q}} \defeq T\pi_{\bar{Q}}(\mathcal{Z}_{\alpha_{\bar{q}}})$ where $\pi_{\bar{Q}}: T^{*}\bar{Q} \to \bar{Q}$ is the cotangent bundle projection, and then set
  \begin{equation}
    \label{eq:Xi-def}
    \Xi_{\alpha_{\bar{q}}}(\mathcal{Y}_{\alpha_{\bar{q}}}, \mathcal{Z}_{\alpha_{\bar{q}}})
    \defeq \ip{{\bf J} \comp \hl^{P}_{q} (\alpha_{\bar{q}})}{ \mathcal{B}_{q}\!\parentheses{ \hl^{\Delta}_{q}(Y_{\bar{q}}), \hl^{\Delta}_{q}(Z_{\bar{q}}) } },
  \end{equation}
  where ${\bf J}: T^{*}Q \to \mathfrak{g}^{*}$ is the momentum map corresponding to the $G$-action, and $\mathcal{B}$ is the curvature two-form of the connection $\mathcal{A}$.
\end{theorem}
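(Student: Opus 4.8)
The plan is to compute the reduced Dirac structure $[D_{P}]_{G}$ explicitly by exploiting that $\mathcal{H} = \ker\mathcal{A}_{P}$ is precisely the horizontal distribution of the principal connection $\mathcal{A}_{P}$ on $\pi^{P}_{G}: P \to P/G$, so that $T_{z}\rho|_{\mathcal{H}_{z}}: \mathcal{H}_{z} \to T_{\bar{z}}T^{*}\bar{Q}$ is a linear isomorphism for each $z$ (diagram~\eqref{dia:Trho}). Having already established that $D_{P}$ is $G$-invariant (the preceding Proposition), I would first spell out the reduction map: a $G$-invariant element $(v_{z}, \alpha_{z}) \in D_{P}(z)$ descends to the pair $(v_{\bar{z}}, \alpha_{\bar{z}})$ on $T^{*}\bar{Q}$ determined by $v_{\bar{z}} \defeq T_{z}\rho(v_{z})$ and by requiring $(T_{z}\rho|_{\mathcal{H}_{z}})^{*}\alpha_{\bar{z}} = \alpha_{z}|_{\mathcal{H}_{z}}$. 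The vertical (i.e.\ $\mathcal{H}^{\circ}_{z}$) part of $\alpha_{z}$ carries no reduced information, and this freedom is exactly what the defining condition of $D_{P}$ leaves undetermined, so the descent is well defined and $G$-invariance makes it consistent along orbits.

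Next I would rewrite the two conditions defining $D_{P}(z)$ in reduced form. Since every $v_{\bar{z}} \in T_{\bar{z}}T^{*}\bar{Q}$ has a unique horizontal lift $v_{z} \in \mathcal{H}_{z}$, the requirement $v_{z}\in\mathcal{H}_{z}$ imposes no constraint downstairs. Pairing $\alpha_{z} - \Omega_{P}^{\flat}(v_{z}) \in \mathcal{H}^{\circ}_{z}$ against an arbitrary $w_{z} \in \mathcal{H}_{z}$ and translating through the isomorphism $T_{z}\rho|_{\mathcal{H}_{z}}$ turns it into
\[
  \ip{\alpha_{\bar{z}}}{w_{\bar{z}}} = \Omega_{P}(v_{z}, w_{z}) = \Omega_{\mathcal{H}}(v_{z}, w_{z}),
\]
where $v_{z}, w_{z} \in \mathcal{H}_{z}$ are the horizontal lifts of $v_{\bar{z}}, w_{\bar{z}}$ and $\Omega_{\mathcal{H}}$ is the form~\eqref{eq:Omega_mathcalH}. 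Hence $\bar{D}(\bar{z})$ is exactly the graph of the flat map of the pushed-down two-form, and comparing with~\eqref{eq:barD} the theorem reduces to the single geometric identity
\[
  \Omega_{P}(v_{z}, w_{z}) = \bar{\Omega}(v_{\bar{z}}, w_{\bar{z}}) - \Xi(v_{\bar{z}}, w_{\bar{z}})
  \quad\text{for all } v_{z}, w_{z} \in \mathcal{H}_{z}.
\]

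To prove this identity---the heart of the argument---I would work with the canonical one-forms, writing $\Theta_{P} \defeq \iota_{P}^{*}\Theta$ for the pullback of the canonical one-form $\Theta$ on $T^{*}Q$, so that $\Omega_{P} = -d\Theta_{P}$, and $\bar{\Theta}$ for the canonical one-form on $T^{*}\bar{Q}$ with $\bar{\Omega} = -d\bar{\Theta}$. Using the pairing identity~\eqref{eq:hl-pairing} relating $\hl^{P}$ and $\hl^{\Delta}$, together with the fact that $\hl^{P}_{q}$ furnishes for each $\bar{z}$ a representative $z = \hl^{P}_{q}(\bar{z}) \in P_{q}$ with $\rho(z) = \bar{z}$, I would show that $\Theta_{P}$ and $\rho^{*}\bar{\Theta}$ agree when paired against horizontal vectors. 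The discrepancy between $\Omega_{P}$ and $\rho^{*}\bar{\Omega}$ on $\mathcal{H}$ then stems entirely from the non-integrability of $\mathcal{H}$: evaluating the exterior derivatives on horizontal vector fields $Y^{\rm h}, Z^{\rm h}$ and invoking the structure equation $\mathcal{B}(Y^{\rm h}, Z^{\rm h}) = -\mathcal{A}([Y^{\rm h}, Z^{\rm h}])$---exactly as in the proof of the Reduced Dirac--Hamilton--Jacobi theorem above---the Lie-bracket term contributes the vertical part $\parentheses{\mathcal{B}_{q}(Y^{\rm h}_{q}, Z^{\rm h}_{q})}_{Q}(q)$, which is contracted against $z$ through the momentum map $\mathbf{J}$ to yield precisely the curvature term $\Xi$ of~\eqref{eq:Xi-def}.

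Finally, I would note that $\bar{\Omega}^{\rm nh} = \bar{\Omega} - \Xi$ is skew-symmetric, so its graph $\bar{D}$ is automatically a Dirac structure ($\bar{D}^{\perp} = \bar{D}$), which completes the identification $[D_{P}]_{G} \cong \bar{D}$. I expect the main obstacle to be the curvature computation of the third step: pinning down the momentum-map contraction and the sign and normalization of $\Xi$ so that it matches~\eqref{eq:Xi-def} exactly requires careful bookkeeping of the three horizontal lifts $\hl^{\Delta}$, $\hl^{P}$, $\hl^{\mathcal{H}}$ and of the connection's structure equations, rather than any conceptual difficulty---the first two steps being essentially formal consequences of $\mathcal{H}$ being the connection's horizontal space.
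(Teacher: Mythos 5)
Your proposal is correct and follows essentially the same route as the paper: both reduce $D_P$ by pushing vectors forward along $\rho$ and pulling covectors back through the horizontal lift $\hl^{\mathcal{H}}$, observe that the membership condition $\alpha_z - \Omega_P^{\flat}(v_z)\in\mathcal{H}_z^{\circ}$ pins the reduced covector down as the $\bar{\Omega}^{\rm nh}$-flat of the reduced vector, and rest on the identity $\Omega_{\mathcal{H}} = \rho^{*}\bar{\Omega}^{\rm nh}|_{\mathcal{H}}$. The only substantive difference is that the paper cites this identity from \citet{HoGa2009}[Proposition~2.2] (and isolates the $G$-equivariance of the reduction map as Lemma~\ref{lem:f}), whereas you propose to derive it yourself from the canonical one-forms and the structure equation $\mathcal{B}(Y^{\rm h},Z^{\rm h}) = -\mathcal{A}([Y^{\rm h},Z^{\rm h}])$ --- a correct but longer, self-contained variant of the same argument.
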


\begin{lemma}
  \label{lem:f}
  Define, for any $z \in P$,
  \begin{equation*}
    f_{z}: T_{z}P \oplus T^{*}_{z}P \to T_{\bar{z}}T^{*}\bar{Q} \oplus T^{*}_{\bar{z}}T^{*}\bar{Q};
    \qquad
    f_{z}(v_{z}, \alpha_{z}) = \parentheses{ T_{z}\rho(v_{z}), T^{*}_{\bar{z}}\varphi \comp (\hl^{\mathcal{H}}_{z})^{*}(\alpha_{z}) },
  \end{equation*}
  where $(\hl^{\mathcal{H}}_{z})^{*}: T^{*}_{z}P \to T^{*}_{[z]}(P/G)$ is the adjoint map of $\hl^{\mathcal{H}}_{z}$.
  Then, $f$ is $G$-invariant, i.e., $f \comp \Psi_{g} = f$ for any $g \in G$.
\end{lemma}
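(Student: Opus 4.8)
The plan is to check $G$-invariance componentwise, exploiting that $f_z$ splits as the direct sum of the vector-valued map $T_z\rho$ on $T_zP$ and the covector-valued map $T^*_{\bar z}\varphi \comp (\hl^{\mathcal{H}}_z)^*$ on $T^*_zP$, matching the splitting of $\Psi_g$ into $T_z\Phi^P_g$ and $T^*_{gz}\Phi^P_{g^{-1}}$. First I would record that, since $\pi^P_G \comp \Phi^P_g = \pi^P_G$, we have $[gz] = [z]$, so $\bar z \defeq \varphi^{-1}([z])$ is unchanged by the action; this guarantees that $f_{gz} \comp \Psi_g$ and $f_z$ share the target $T_{\bar z}T^*\bar Q \oplus T^*_{\bar z}T^*\bar Q$ and that the common factor $T^*_{\bar z}\varphi$ may be cancelled on the covector side. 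It then suffices to prove the two identities $T_{gz}\rho \comp T_z\Phi^P_g = T_z\rho$ and $(\hl^{\mathcal{H}}_{gz})^* \comp T^*_{gz}\Phi^P_{g^{-1}} = (\hl^{\mathcal{H}}_z)^*$.

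The vector component is immediate from the $G$-invariance of $\rho$. From the commuting triangle~\eqref{dia:rho} we have $\rho = \varphi^{-1} \comp \pi^P_G$, and since $\pi^P_G \comp \Phi^P_g = \pi^P_G$ this gives $\rho \comp \Phi^P_g = \rho$. Differentiating at $z$ yields $T_{gz}\rho \comp T_z\Phi^P_g = T_z\rho$, which is the first identity.

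The covector component is the substantive step, and I would first establish the $G$-equivariance of the horizontal lift,
\[
  T_z\Phi^P_g \comp \hl^{\mathcal{H}}_z = \hl^{\mathcal{H}}_{gz},
  \qquad \text{equivalently} \qquad
  T_{gz}\Phi^P_{g^{-1}} \comp \hl^{\mathcal{H}}_{gz} = \hl^{\mathcal{H}}_z.
\]
This follows from two facts already in hand: $\mathcal{H} = \ker\mathcal{A}_P$ is $G$-invariant and $\pi^P_G$ is $G$-invariant. Indeed, $\hl^{\mathcal{H}}_z(w)$ is the unique vector in $\mathcal{H}_z$ with $T_z\pi^P_G$-image equal to $w$; applying $T_z\Phi^P_g$ keeps it in $\mathcal{H}_{gz}$ (invariance of $\mathcal{H}$) and preserves its $\pi^P_G$-image (invariance of $\pi^P_G$), so by uniqueness it equals $\hl^{\mathcal{H}}_{gz}(w)$. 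I would then pass to adjoints by pairing against an arbitrary $w \in T_{[z]}(P/G)$: for any $\alpha_z \in T^*_zP$,
\[
  \ip{(\hl^{\mathcal{H}}_{gz})^* \comp T^*_{gz}\Phi^P_{g^{-1}}(\alpha_z)}{w}
  = \ip{\alpha_z}{T_{gz}\Phi^P_{g^{-1}} \comp \hl^{\mathcal{H}}_{gz}(w)}
  = \ip{\alpha_z}{\hl^{\mathcal{H}}_z(w)}
  = \ip{(\hl^{\mathcal{H}}_z)^*(\alpha_z)}{w},
\]
using the defining properties of the cotangent lift and of the adjoints, together with the equivariance just proved. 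Since $w$ is arbitrary this yields the second identity, and composing with $T^*_{\bar z}\varphi$ completes the covector part. Combining the two components gives $f_{gz} \comp \Psi_g = f_z$, that is, $f \comp \Psi_g = f$.

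The main obstacle is the covector component: one must dualize carefully, keeping track of the fact that the horizontal lift and its adjoint are taken at the moved point $gz$, and the whole argument rests on the equivariance $T\Phi^P_g \comp \hl^{\mathcal{H}} = \hl^{\mathcal{H}}$, which itself depends on $\mathcal{H}$ and $\pi^P_G$ both being $G$-invariant. The vector component, by contrast, is a one-line consequence of $\rho$ being constant on $G$-orbits.
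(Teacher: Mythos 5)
Your proof is correct and follows essentially the same route as the paper's: handle the two components separately, using $\rho=\varphi^{-1}\comp\pi^{P}_{G}$ and $\pi^{P}_{G}\comp\Phi^{P}_{g}=\pi^{P}_{G}$ for the vector part, and the equivariance $T_{gz}\Phi^{P}_{g^{-1}}\comp\hl^{\mathcal{H}}_{gz}=\hl^{\mathcal{H}}_{z}$ passed through adjoints for the covector part. The only difference is that you supply a proof of that equivariance (via $G$-invariance of $\mathcal{H}$ and uniqueness of the horizontal lift), which the paper simply cites as a known property.
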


\begin{remark}
  The map $f_{z}|_{D_{P}(z)}$, i.e., $f_{z}$ defined above restricted to $D_{P}(z) \subset T_{z}P \oplus T^{*}_{z}P$, is the backward Dirac map (see \citet{BuRa2003}) of
  \begin{equation*}
    \phi_{z} \defeq \hl^{\mathcal{H}}_{z} \comp T_{\bar{z}}\varphi: T_{\bar{z}}T^{*}\bar{Q} \to T_{z}P,
  \end{equation*}
  that is, $f_{z} = \mathcal{B}\phi_{z}$ using the notation in \citet{BuRa2003}; hence the image $f(D_{P}) \subset TT^{*}\bar{Q} \oplus T^{*}T^{*}\bar{Q}$ is a Dirac structure.
\end{remark}

\begin{proof}[Proof of Lemma~\ref{lem:f}]
  Let $(v_{z}, \alpha_{z}) \in T_{z}P \oplus T^{*}_{z}P$ and $(\tilde{v}_{g z}, \tilde{\alpha}_{g z}) \defeq \Psi_{g}( v_{z}, \alpha_{z} )$ for $g \in G$, i.e.,
  \begin{equation*}
    \tilde{v}_{g z} = T_{z}\Phi^{P}_{g}(v_{z}),
    \qquad
    \tilde{\alpha}_{g z} = T_{g z}^{*}\Phi^{P}_{g^{-1}}(\alpha_{z}).
  \end{equation*}
  Using the identities $\rho = \varphi^{-1} \comp \pi^{P}_{G}$ (see diagram~\eqref{dia:rho}) and $\pi^{P}_{G} \comp \Phi^{P}_{g} = \pi^{P}_{G}$, we have
  \begin{align*}
    T_{g z}\rho(\tilde{v}_{g z})
    &= T_{g z}\rho \comp T_{z}\Phi^{P}_{g}(v_{z})
    \\
    &= T_{[z]}\varphi^{-1} \comp T_{g z}\pi^{P}_{G} \comp T_{z}\Phi^{P}_{g}(v_{z})
    \\
    &= T_{[z]}\varphi^{-1} \comp T_{z}(\pi^{P}_{G} \comp \Phi^{P}_{g})(v_{z})
    \\
    &= T_{[z]}\varphi^{-1} \comp T_{z}\pi^{P}_{G}(v_{z})
    \\
    &= T_{z}\rho(v_{z}).
  \end{align*}
  On the other hand, for any $w_{[z]} \in T_{[z]}(P/G$),
  \begin{align*}
    (\hl^{\mathcal{H}}_{g z})^{*}(\tilde{\alpha}_{g z})
    &= (\hl^{\mathcal{H}}_{g z})^{*} \comp T_{g z}^{*}\Phi^{P}_{g^{-1}}(\alpha_{z})
    \\
    &= \parentheses{ T_{g z}\Phi^{P}_{g^{-1}} \comp \hl^{\mathcal{H}}_{g z} }^{*} (\alpha_{z})
    \\
    &= (\hl^{\mathcal{H}}_{z})^{*} (\alpha_{z}),
  \end{align*}
  because of the invariance property of the horizontal lift $\hl^{\mathcal{H}}$, i.e., $T_{g z}\Phi^{P}_{g^{-1}} \comp \hl^{\mathcal{H}}_{g z} = \hl^{\mathcal{H}}_{z}$.
  Hence it follows that $f_{g z} \circ \Psi_{g}(v_{z}, \alpha_{z}) = f_{g z}(\tilde{v}_{g z}, \tilde{\alpha}_{g z}) = f_{z}(v_{z}, \alpha_{z})$.
\end{proof}

\begin{proof}[Proof of Theorem~\ref{thm:barD}]
  Lemma~\ref{lem:f} implies that the map $f|_{D_{P}}$ defined above induces the following well-defined map:
  \begin{equation*}
    \bar{f}: [D_{P}]_{G} \to TT^{*}\bar{Q} \oplus T^{*}T^{*}\bar{Q};
    \quad
    [(v_{z}, \alpha_{z})]_{G} \mapsto \parentheses{ T_{z}\rho(v_{z}), T^{*}_{\bar{z}}\varphi \comp (\hl^{\mathcal{H}}_{z})^{*}(\alpha_{z}) },
  \end{equation*}
  i.e., the diagram below commutes.
  \begin{equation*}
    \begin{array}{c}
      \xymatrix@!0@R=0.8in@C=0.9in{
        D_{P} \ar[d]_{/G} \ar[dr]^{f|_{D_{P}}} &
        \\
        [D_{P}]_{G} \ar[r]_{\bar{f}} & \bar{D}
      }
    \end{array}
  \end{equation*}

  Let us look into the image $\bar{D} \defeq \bar{f}([D_{P}]_{G})$.
  Notice first that
  \begin{equation*}
    T_{z}\rho(\mathcal{H}_{z}) = T_{[z]}\varphi^{-1} \comp T_{z}\pi^{P}_{G}(\mathcal{H}_{z}) = T_{\bar{z}}T^{*}\bar{Q},
  \end{equation*}
  since $T_{z}\pi^{P}_{G}(\mathcal{H}_{z}) = T_{[z]}(P/G)$ and $T_{[z]}\varphi^{-1}$ is surjective.

  On the other hand, notice that $w^{\rm h}_{z} \defeq \hl^{\mathcal{H}}_{z} \comp T_{\bar{z}}\varphi(w_{\bar{z}})$ is in $\mathcal{H}_{z}$ for any $w_{\bar{z}} \in T_{\bar{z}}T^{*}\bar{Q}$, whereas $\alpha_{z} - \Omega_{P}^{\flat}(v_{z}) \in \mathcal{H}^{\circ}_{z}$.
  So we have
  \begin{align*}
    0 &= \ip{ \alpha_{z} - \Omega_{P}^{\flat}(v_{z}) }{ \hl^{\mathcal{H}}_{z} \comp T_{\bar{z}}\varphi(w_{\bar{z}}) }
    \\
    &= \ip{ T^{*}_{\bar{z}}\varphi \comp (\hl^{\mathcal{H}}_{z})^{*}\alpha_{z} - T^{*}_{\bar{z}}\varphi \comp (\hl^{\mathcal{H}}_{z})^{*}\Omega_{P}^{\flat}(v_{z}) }{ w_{\bar{z}} }.
  \end{align*}
  Therefore,
  \begin{equation*}
    T^{*}_{\bar{z}}\varphi \comp (\hl^{\mathcal{H}}_{z})^{*}\alpha_{z} = T^{*}_{\bar{z}}\varphi \comp (\hl^{\mathcal{H}}_{z})^{*}\Omega_{P}^{\flat}(v_{z}).
  \end{equation*}
  However, for an arbitrary $w_{\bar{z}} \in T_{\bar{z}}T^{*}\bar{Q}$,
  \begin{align*}
    \ip{ T^{*}_{\bar{z}}\varphi \comp (\hl^{\mathcal{H}}_{z})^{*}\Omega_{P}^{\flat}(v_{z}) }{ w_{\bar{z}} }
    &= \Omega_{P}\!\parentheses{ v_{z}, \hl^{\mathcal{H}}_{z} \comp T_{\bar{z}}\varphi(w_{\bar{z}}) }
    \\
    &= \Omega_{\mathcal{H}}\!\parentheses{ v_{z}, \hl^{\mathcal{H}}_{z} \comp T_{\bar{z}}\varphi(w_{\bar{z}}) }
    \\
    &= \rho^{*}\bar{\Omega}^{\rm nh}\!\parentheses{ v_{z}, \hl^{\mathcal{H}}_{z} \comp T_{\bar{z}}\varphi(w_{\bar{z}}) }
    \\
    &= \bar{\Omega}^{\rm nh}\!\parentheses{ T_{z}\rho(v_{z}), T_{z}\rho \comp \hl^{\mathcal{H}}_{z} \comp T_{\bar{z}}\varphi(w_{\bar{z}}) }
    \\
    &= \bar{\Omega}^{\rm nh}\!\parentheses{ T_{z}\rho(v_{z}), w_{\bar{z}} }
    \\
    &= \ip{ (\bar{\Omega}^{\rm nh})^{\flat} \comp T_{z}\rho(v_{z}) }{ w_{\bar{z}} },
  \end{align*}
  where the second line follows from the definition of $\Omega_{\mathcal{H}}$, Eq.~\eqref{eq:Omega_mathcalH}, since $(v_{z}, \alpha_{z}) \in D_{P}(z)$ implies $v_{z} \in \mathcal{H}_{z}$; the third line follows from $\rho^{*}\bar{\Omega}^{\rm nh}|_{\mathcal{H}} = \Omega_{\mathcal{H}}$ (see \citet{HoGa2009}[Proposition~2.2]); the fifth from diagram~\eqref{dia:Trho}.
  As a result, we have
  \begin{equation*}
    T^{*}_{\bar{z}}\varphi \comp (\hl^{\mathcal{H}}_{z})^{*}\alpha_{z} = (\bar{\Omega}^{\rm nh})^{\flat} \comp T_{z}\rho(v_{z}),
  \end{equation*}
  and thus
  \begin{equation*}
    \bar{f}\parentheses{ [(v_{z}, \alpha_{z})]_{G} }
    = f(v_{z}, \alpha_{z})
    = \parentheses{ T_{z}\rho(v_{z}), (\bar{\Omega}^{\rm nh})^{\flat} \comp T_{z}\rho(v_{z}) }.
  \end{equation*}
  Since $T_{z}\rho(\mathcal{H}_{z}) = T_{\bar{z}}T^{*}\bar{Q}$, the image $\bar{D} = \bar{f}([D_{P}]_{G}) = f(D_{P})$ is given by Eq.~\eqref{eq:barD}.
\end{proof}

\subsection{Reduction of Weakly Degenerate Chaplygin Systems}
Reduced dynamics of the constrained implicit Hamiltonian system,
Eq.~\eqref{eq:ConstrainedIHS}, for weakly Chaplygin systems
follows easily from Theorem~\ref{thm:barD}: For weakly Chaplygin
systems, it is straightforward to show that the constrained
Hamiltonian $H_{P}$ is related to the reduced Hamiltonian defined
in Eq.~\eqref{eq:barH} as follows:
\begin{equation}
  \label{eq:H_P-barH}
  \bar{H} = H_{P} \comp \hl^{P},
  \qquad
  H_{P} = \bar{H} \comp \rho,
\end{equation}
and also that if $(X_{P}, dH_{P}) \in D_{P}$, then defining $\bar{X}(\bar{z}) \defeq T_{z}\rho \cdot X_{P}(z)$, we have
\begin{equation*}
  f(X_{P}(z), dH_{P}(z)) = \parentheses{ \bar{X}(\bar{z}), d\bar{H}(\bar{z}) },
\end{equation*}
because, using $\hl^{\mathcal{H}}_{z} \comp T_{\bar{z}}\varphi = (T_{z}\rho|_{\mathcal{H}_{z}})^{-1}$ (see diagram~\eqref{dia:Trho}) and Eq.~\eqref{eq:H_P-barH}, for any $v_{\bar{z}} \in T_{\bar{z}}T^{*}\bar{Q}$,
\begin{align*}
  \ip{ T^{*}_{\bar{z}}\varphi \comp (\hl^{\mathcal{H}}_{z})^{*} dH_{P}(z) }{ v_{\bar{z}} }
  &= \ip{ dH_{P}(z) }{ \hl^{\mathcal{H}}_{z} \comp T_{\bar{z}}\varphi(v_{\bar{z}}) }
  \\
  &= \ip{ dH_{P}(z) }{ (T_{z}\rho|_{\mathcal{H}_{z}})^{-1}(v_{\bar{z}}) }
  \\
  &= \ip{ \rho^{*}d\bar{H}(z) }{ (T_{z}\rho|_{\mathcal{H}_{z}})^{-1}(v_{\bar{z}}) }
  \\
  &= \ip{ d\bar{H}(\bar{z}) }{ T_{z}\rho \comp (T_{z}\rho|_{\mathcal{H}_{z}})^{-1}(v_{\bar{z}}) }
  \\
  &= \ip{ d\bar{H}(\bar{z}) }{ v_{\bar{z}} }.
\end{align*}
Therefore, the constrained implicit Hamiltonian system,
Eq.~\eqref{eq:ConstrainedIHS}, reduces to
\begin{equation*}
  (\bar{X}, d\bar{H}) \in \bar{D},
\end{equation*}
or
\begin{equation*}
  i_{\bar{X}} \bar{\Omega}^{\rm nh} = d\bar{H}.
\end{equation*}

\begin{remark}
  Again, this result is essentially a restatement of the nonholonomic reduction of \citet{Ko1992} (see also \citet{BaSn1993}, \citet{CaLeMaDi1999}, and \citet{HoGa2009}) in the language of Dirac structures and implicit Hamiltonian systems.
\end{remark}

\bibliography{DiracHJ}

\end{document}